\tikzset{  diffVertex/.style={circle, draw=black, thick, fill= white,  minimum size =2.5mm, inner sep=0mm},
	diffVertexGray/.style={circle, draw=gray, thick, fill= white,  minimum size =2.5mm, inner sep=0mm},
	intVertex/.style={circle, draw=black, thick, fill= black,  minimum size =2.5mm, inner sep=0mm},
	diffTree/.style={regular polygon, regular polygon sides=4, , draw=black, thick, fill= white,  minimum size =3.2mm, inner sep=0mm}, 
	intTree/.style={regular polygon, regular polygon sides=4, , draw=black, thick, fill=black,  minimum size =3.2mm, inner sep=0mm}, 
	-|-/.style={decoration={markings, 	mark=at position .5 with {\arrow{|}}},postaction={decorate}}
}
\newtheorem{theorem}{Theorem}[section]
\newtheorem{lemma}[theorem]{Lemma}
\theoremstyle{definition}
\newtheorem{definition}[theorem]{Definition}
\newtheorem{example}[theorem]{Example}
\numberwithin{equation}{section}
\newcommand*{\mb}{\mathbb}
\newcommand*{\mc}{\mathcal}
\newcommand{\abs}[1]{\left | #1 \right |}
\renewcommand{\d}{\textnormal{d}}
\title{Perturbation theory of transformed quantum fields}
\author{Paul-Hermann Balduf}
\thanks{The author thanks Dirk Kreimer and Karen Yeats for helpful discussion.}
\begin{document}
	
\maketitle

\begin{abstract}
 We consider a scalar quantum field $\phi$ with arbitrary polynomial self-interaction in perturbation theory.
 If the field variable $\phi$ is repaced by a local diffeomorphism $\phi(x) = \rho(x) + a_1 \rho^2(x) +\ldots$, this field $\rho$ obtains infinitely many additional interaction vertices. We show that the $S$-matrix of $\rho$ coincides with the one of $\phi$ without using path-integral arguments. This result holds even if the underlying field has a  propagator of higher than quadratic order in the momentum.
 
 If tadpole diagrams vanish, the diffeomorphism can be tuned to cancel all contributions of an underlying $\phi^s$-type self interaction at one fixed external offshell momentum, rendering $\rho$ a free theory at this momentum. 
 
 Finally, we propose one way to extend the diffeomorphism to a non-local transformation involving derivatives without spoiling the combinatoric structure of the local diffeomorphism.
\end{abstract}

\section{Introduction}

\subsection{Motivation and content}

A quantum field theory can be defined via a Lagrangian density $\mc L$, simply called Lagrangian hereafter.  In perturbative computations in this theory, the monomials of degree $n>2$ in $\mc L$ correspond  to $n$-valent interaction vertices in  Feynman diagrams. The monomials of degree two define the propagator of the field. We only consider scalar quantum fields in this paper. A free scalar quantum field theory has no self-interaction and is defined via the Lagrangian 
\begin{align}\label{Lfree} 
\mc L_\phi  &=\frac 12   \partial_\mu \phi(x) \partial^\mu \phi(x) -\frac 12 m^2 \phi^2(x),
\end{align}
where $m\geq 0 $ is the mass of a $\phi$-particle. We allow $m=0$ for the massles theory.  $x$ is a point in spacetime which we take to be 4-dimensional for concreteness even if the results do not depend on this choice.

We express the field $\phi(x)$ as a diffeomorphism
\begin{align}\label{def_diffeomorphism} 
\phi(x) &= \sum_{j=0}^\infty a_j \rho^{j+1}(x), \qquad a_0=1
\end{align}
of another field $\rho(x)$.  $\{a_j\}_{j\geq 1}$ are constants with respect to spacetime. The constraint $a_0=1$ means the diffeomorphism is tangent to identity, i.e. the fields $\rho$ and $\phi$ coincide at leading order.  This replacement, when applied to the Lagrangian density $\mc L_\phi$ of $\phi$, gives rise to a Lagrangian  $\mc L_\rho$ of $\rho$ which generally involves monomials of any order. Thus,  $\rho$ is an interacting quantum field theory even if the original field $\phi$ was free.

  If \cref{def_diffeomorphism} is applied to the classical field theory before canonical quantization, it amounts to a canonical transformation which leaves the poisson brackets intact  \cite{nakanishi_covariant_1990} and only changes the Lagrangian, thus relating theories with different Lagrangians.
  
In the framework of the path integral,  observables appear invariant under diffeomorphisms as the diffeomorphism can be undone by a redefinition of the integration measure. However, different lines of argument lead to different results \cite{apfeldorf} which might be due to operator ordering ambiguities \cite{pointPathIntegral}. One possibility to resolve these puzzles is to understand field diffeomorphisms order by order in perturbation theory and then, in a later step, extend these results to a statement about the generating functionals of Feynman graphs. This might be a mathematically cleaner way than direct, formal manipulations of field variables within (divergent) generating functionals. This argument is inspired by the recent proof \cite{jackson_robust_2017} that the Legendre transform - which relates the generating functionals of connected Feynman graphs to that of 1PI-graphs - can be understood order by order without  problems regarding the divergence of said generating functionals in quantum field theory. 
In this paper we focus on the first of the two steps, i.e. the change of the Lagrangian density $\mc L_\phi \mapsto \mc L_\rho=\mc L_\phi(\phi(\rho))$ which is induced by a transformation $\phi \mapsto \rho$ and examination of the Feynman rules and correlation functions of the theory defined by $\mc L_\rho$. 

This paper is a continuation of earlier work by Kreimer, Yeats and Velenich \cite{KY17,velenich}.
In the remainder of this section and \cref{sec_free}, we will set up notation and review the results of \cite{KY17}, which mainly consider a diffeomorphism of an underlying free theory. Many concepts developed there can be readily applied to diffeomorphisms of an underlying interacting theory, which will be done in \cref{subs_int}.   Next, we proceed  in \cref{sec_cancellation} to a possible application. Namely, we show that a diffeomorphism can - for offshell correlation functions - alter the type of interaction present in the theory. In \cref{sec_arbitrary} we note that the results of \cite{KY17} for the free theory are not limited to the specific momentum dependece of the propagator used there.  Finally, in \cref{subs_nonlocal} we present a possibility of including derivatives into the transformation $\phi \mapsto \rho$ such that the welcome combinatorial structure of the local diffeomorphism is conserved.  
 
\subsection{Prerequisites}

A Lagrangian  is a power series in the field variable $\phi$, hence replacing $\phi$ by $\phi(\rho)$ according to \cref{def_diffeomorphism} amounts to an insertion of power series into each other. For formal power series, the coefficient of the concatenation are given by Fa\`{a} di Bruno's formula \cite{flajolet_analytic_2009}: If
\begin{align*} 
f(t) &= \sum_{n=1}^\infty f_n t^n \quad \text{ and } \quad  g(t) = \sum_{n=0}^\infty g_n t^n 
\end{align*} 
then
\begin{align}\label{faadibruno}
[t^n] \left( f\left( g(t) \right)   \right)  &=\frac{1}{n!} \sum_{k=1}^n k!f_k \cdot B_{n,k} \left( 1! g_1, 2! g_2 \ldots,(n+1-k)! g_{n+1-k} \right) .
\end{align}
Here, $[t^n]$ denotes extraction of the $n^{\text{th}}$ coefficient (i.e. $[t^n]f(t) = f_n$) and  $B_{n,k}$ are the partial Bell polynomials, defined via
\begin{align}\label{bell_generating}
\sum_{n=0}^\infty \sum_{k=0}^n B_{n,k} \left( x_1, x_2, \ldots \right) u^k \frac{t^n}{n!} &= \exp \left( u \sum_{j=1}^\infty x_j \frac{t^j}{j!}\right) . 
\end{align}
Bell Polynomials count the possible partitions of  $\left \lbrace 1, \ldots, n \right \rbrace $ into  $k$ nonempty disjoint subsets:
\begin{align*}
B_{n,k} \left( x_1, x_2, x_3, \ldots \right) &=\sum_P x_{\abs {P_1}} \cdots x_{\abs{P_k}}
\end{align*}
where
\begin{align}\label{bell_partitions}
P &= \big \lbrace  \emptyset \neq P_i \subseteq \left \lbrace 1, \ldots, n \right \rbrace ~\forall i, \quad  P_i \cap P_j = \emptyset ~ \forall i\neq j, \\
&\qquad P_1 \cup \ldots \cup P_k = \left \lbrace 1, \ldots, n \right \rbrace  \big \rbrace. \nonumber
\end{align}

Inversion of a power series is given by Lagrange inversion \cite{lagrangeInversion}, which again  applies for formal power series regardless of convergence as functions \cite{henrici_algebraic_1964}:
\begin{align}\label{lagrange_inversion} 
 \left( f^{-1} \right) _n &=   \frac 1 {n!} \sum_{k=1}^{n-1}   \frac{1}{f_1^{n+k} }B_{n-1+k, k} \left( 0, -2! f_2,- 3! f_3 , \ldots  \right), \quad \left( f^{-1} \right) _1=\frac{1}{f_1} . 
\end{align}
We note in passing that  concatenation and inversion of formal power series can also be interpreted as coproduct and antipode of the Fa\`{a} di Bruno Hopf algebra \cite{figueroa_faa_2005}.

 For fixed $a\in \mb N_0$ and $b\in \mb N$ the \emph{Fuss-Catalan numbers} are defined as
	\begin{align}\label{fc}
	F_m (a,b) &= \frac{b}{ma+b}\binom{ma+b}{m} = b\frac{(ma+b-1)! }{(ma+b-m)! m!}.
	\end{align}

We will also use the Euler characteristic which relates the number of vertices $\abs{V_\Gamma}$, (internal) edges $\abs{E_\Gamma}$ and loops $\abs \Gamma$ of a connected graph $\Gamma$: 
\begin{align}\label{euler_graph}
\abs{V_\Gamma} - \abs{E_\Gamma} + \abs{\Gamma} &=1.
\end{align}

\section{Local diffeomorphisms of a free field}\label{sec_free}
A significant part of this section is  a review of  \cite{KY17}, introducing concepts and results needed for the current paper.

\subsection{Feynman rules }

\begin{definition}\label{def_offshellvariable} 
	For a 4-momentum $p$ in a theory with mass $m\geq 0$, the corresponding \emph{offshell variable} is defined as
	\begin{align*}
	x_p &:= p^2-m^2.
	\end{align*}
\end{definition}
This generalizes to sums of numbered momenta in a slight abuse of notation, e.g. $x_{2+5} :=x_{p_2 + p_5}\equiv  (p_2+p_5)^2-m^2$. We will also use the notation $x_e:=p_e^2-m^2$ where $e$ is some edge in a graph with momentum $p_e$ flowing through it.

Applying the diffeomorphism \cref{def_diffeomorphism} to the free Lagrangian  \cref{Lfree}  and collecting the derivatives of the kinetic term by partial integration yields
\begin{align}\label{L_free_rho}  
\mc L_\rho := \mc L_\phi(\phi(\rho))&= - \sum_{n=1}^{\infty}  \frac{f_{n+1}}{n!} \rho^{n} \partial_\mu \partial^\mu \rho    -m^2\sum_{n=2}^{\infty} \frac{  c_{n-2}}{n!} \rho^{n} .
\end{align}
The quantities $f_n, g_n$ appear as coupling constants induced by the diffeomorphism. 
\begin{align*}
f_n&= B_{n-2,1} (  2! a_1, 3! a_2, \ldots) + B_{n-2,2} (2! a_1, 3! a_2, \ldots ),  \\
 c_{n-2}&=B_{n,2} \left( 1, 2! a_1,  \ldots  \right) ,\\
g_n &:= n f_n - c_{n-2}= \frac {n(n-2)!}{2 } \sum_{k=0}^{n-2} a_{n-k-2} a_k   (n-k-2) k .
\end{align*}
From \cref{L_free_rho}, one reads off the $n$-valent  vertex Feynman rules
\begin{align}\label{diff_vn}
iv_n &=  i f_n \cdot \left( x_1 + \ldots + x_n \right)  + i g_n m^2, \quad n \geq 2. 
\end{align}
We will subsequently call these vertices \emph{diffeomorphism vertices}.
The first ones together with their explicit Feynman rules are depicted in \cref{free_vertex_picture}.

 \begin{figure}[htbp]
	\centering
	\begin{tikzpicture}

	\node at (-1,0) {$iv_3=$};
	
	\node [diffVertex]  (c) at (0,0) {} ;
	
	\draw [thick] (c) -- + (90:.5);
	\draw [thick] (c) -- + (210:.5);
	\draw [thick] (c) -- + (330:.5);
	
	\node at (.5,0) [anchor=west,text width = 9 cm] {$=2ia_1 \left(x_1+x_2+x_3\right)$};

	\node at (-1,-1.5) {$iv_4=$};
	
	\node [diffVertex]  (c) at (0,-1.5) {} ;
	
	\draw [thick] (c) -- + (45:.5);
	\draw [thick] (c) -- + (135:.5);
	\draw [thick] (c) -- + (225:.5);
	\draw [thick] (c) -- + (315:.5);
	
	\node at (.5,-1.5)  [anchor=west,text width = 9cm]{$=  4im^2 a_1^2 + i\left( 6 a_2+4 a_1^2 \right)  \left(x_1+x_2+x_3+x_4\right) $};

	\node at (-1,-3) {$iv_5=$};
	
	\node [diffVertex]  (c) at (0,-3) {} ;
	
	\draw [thick] (c) -- + (90:.5);
	\draw [thick] (c) -- + (162:.5);
	\draw [thick] (c) -- + (234:.5);
	\draw [thick] (c) -- + (306:.5);
	\draw [thick] (c) -- + (18:.5);
	
	\node at (.5,-3) [anchor=west,text width = 11cm]{$= 60 im^2 a_1 a_2 +  i\left( 24 a_3+36 a_1 a_2  \right)  \left(x_1+x_2+x_3+x_4+x_5\right) $};

	\end{tikzpicture}
	\caption{Graphical representation of the diffeomorphism vertices $iv_n$ from \cref{diff_vn}. The numbered momenta correspond to edges adjacent to the vertex.}
	\label{free_vertex_picture}
\end{figure}

\FloatBarrier
\subsection{Tree sums with one external edge offshell}

Since the diffeomorphism \cref{def_diffeomorphism} is tangent to identity, the 2-point vertex is unaltered. The field $\rho$ has the same propagator (=non-amputated time ordered 2-point function) as $\phi$, with \cref{def_offshellvariable}
\begin{align}\label{free_propagator} 
 \Gamma_{2, \text{free}}^{-1} = \langle \rho(p) \rho(-p)\rangle  &= \frac{i}{p^2-m^2}=\frac{i}{x_p}.
\end{align}

\begin{definition}\label{def_bn}
	The tree sums $b_n$ for $n\geq 2$ are defined as the sum of all trees with $n$ external edges onshell (i.e. $x_j=0$ for these edges $j$) and additionally one external edge offshell, where the propagator of this offshell edge is included in $b_n$. Further, $b_1:=1$.
\end{definition}

The construction of $b_n$ is illustrated in \cref{free_b1_b2_b3_picture}.

\begin{example}\label{ex_bn}
	An explicit calculation using \cref{diff_vn} yields
	\begin{align*} 
	b_2 &= \frac{i}{x_{1+2}}\cdot 2ia_1 \left( x_1 + x_2 + x_{1+2} \right) \Big|_{x_1=0=x_2} = -2 a_1\\
	b_3 &= -6 a_2 + 12 a_1^2.
	\end{align*}
\end{example}

\begin{figure}[h]
	\centering
	\begin{tikzpicture}

	\node at (-2,1.5)  [anchor=west,text width = 1cm]{$b_1=$};
	
	\node [diffTree] (c) at (-.8,1.5) {};
	\draw [thick] (c) -- + (90:.5);
	\draw [-|,thick] (c) -- +(270:.4);
	
	\node at (-.2,1.5) {$=$};
	
	\node at (.5, 1.5){$1$};

	\node at (-2,0) [anchor=west,text width = 1cm] {$b_2=$};
	
	\node [diffTree]  (c) at (-.8,0) {} ;
	
	\draw [thick] (c) -- + (90:.5);
	\draw [ thick, -|,bend angle=20,bend left] (c.245) to +(240:.4);
	\draw [thick, -|, bend angle=20, bend right] (c.295) to +(300:.4); 
	
	\node at (-.2 ,0) {$=$};
	
	\node [diffVertex]  (c) at (.6,0) {} ;
	
	\draw [thick] (c) -- + (90:.5);
	\draw [-|,thick] (c) -- +(240:.5);
	\draw [-|,thick] (c) -- +(300:.5); 
	
	\node (of) at (2, .8) [anchor=west,text width = 7cm] {offshell edge, propagator included};
	\node (on) at (2, 0) [anchor=west,text width = 7cm] {onshell edges, no propagator};
	
	\draw [->, bend angle=10, bend right] (of.180) to (.8,.4);
	\draw [->,  bend angle=10, bend right] (on.180) to (.9,-.2);
	
	\node at (-2,-1.5) [anchor=west,text width = 1cm] {$b_3=$};

	\node [diffTree]  (c) at (-.8,-1.5) {} ;
	
	\draw [thick] (c) -- + (90:.5);
	\draw [-|, thick, bend angle=20, bend left] (c.240) to +(230:.5);
	\draw [-|, thick] (c.270) -- +(270:.5); 
	\draw [-|, thick, bend angle=20, bend right] (c.300) to +(310:.5); 
	
	\node at (-.2,-1.5) {$=$};
	
	\node [diffVertex] (c1) at (.6,-1.5) {};
	\draw [thick] (c1) -- + (90:.5);
	\draw [-|,thick] (c1) -- +(220:.7);
	\draw [-|,thick] (c1) -- +(270:.6); 
	\draw [-|,thick] (c1) -- +(320:.7);
	
	\node at (1.8,-1.6)   {$+$};
	\node [diffVertex] (c1) at (3,-1.3) {};
	\node [diffVertex] (c2) at (3.3,-1.7) {};
	\draw [thick] (c1) -- (c2);
	\draw [thick] (c1) -- + (90:.5);
	\draw [-|,thick] (c1) -- +(230:.8);
	\draw [-|,thick] (c2) -- +(240:.5); 
	\draw [-|,thick] (c2) -- +(310:.5);
	
	\node at (4,-1.6)   {$+$};
	\node [diffVertex] (c1) at (5,-1.3) {};
	\node [diffVertex] (c2) at (4.7,-1.7) {};
	\draw [thick] (c1) -- (c2);
	\draw [thick] (c1) -- + (90:.5);
	\draw [-|,thick] (c1) -- +(310:.8);
	\draw [-|,thick] (c2) -- +(240:.5); 
	\draw [-|,thick] (c2) -- +(310:.5);
	
	\node at (6,-1.6)   {$+$};
	\node [diffVertex] (c1) at (6.9,-1.3) {};
	\node [diffVertex] (c2) at (7.2,-1.7) {};
	\draw [thick] (c1) -- (c2);
	\draw [thick] (c1) -- + (80:.5);
	\draw [-|,thick] (c1) -- +(270:.9);
	\draw [-|,thick] (c2) -- +(205:1); 
	\draw [-|,thick] (c2) -- +(310:.5);

	\end{tikzpicture}
	\caption{
		The first tree sums $b_n$. The onshell edges are indicated with a perpendicular line at the end.   
	}
	\label{free_b1_b2_b3_picture}
\end{figure}
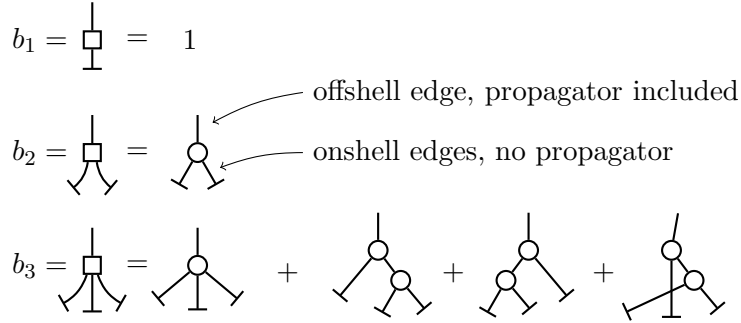

The tree sums $_n$ computed in \cref{ex_bn} are independent of masses and momenta. This continues even for higher $n$, the following remarkable result was shown in 	\cite[Thm.~3.5]{KY17}:
\begin{theorem}\label{thm_bn}
	\begin{align*}
	b_{n+1} &= \sum_{k=1}^{n}\frac{(n+k)!}{n!}B_{n,k} \left( -1! a_1, -2! a_2, \ldots, -n! a_n\right) .
	\end{align*}
\end{theorem}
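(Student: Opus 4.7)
The plan is to identify the tree sum $b_{n+1}$ with a factorial multiple of the $(n{+}1)^{\text{st}}$ coefficient of the compositional inverse $A^{-1}$ of the diffeomorphism power series $A(t):=\sum_{j\ge 0}a_j t^{j+1}$. Expanding each Bell polynomial via \cref{bell_partitions} gives the algebraic identity
\begin{equation*}
B_{n,k}(-1!a_1,\,-2!a_2,\,\ldots,\,-n!a_n) \;=\; (-1)^k\,\frac{n!}{k!}\,[t^n]\!\left(\sum_{i\ge 1} a_i t^i\right)^{\!k},
\end{equation*}
and substituting this into the claimed formula, applying the negative-binomial expansion of $(A(t)/t)^{-(n+1)}$, and invoking the Lagrange inversion formula \cref{lagrange_inversion}, shows that the right-hand side equals $(n+1)!\,[t^{n+1}]A^{-1}(t)$. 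The theorem is therefore equivalent to $b_n = n!\,[t^n]A^{-1}(t)$ for every $n\ge 1$, i.e.\ to the statement that the exponential generating function $B(t):=\sum_{n\ge 1}b_n t^n/n!$ coincides with the compositional inverse of the diffeomorphism; this is consistent with the hand computations in \cref{ex_bn}.

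I would then prove this equivalent form by induction on $n$, using the tree decomposition at the vertex adjacent to the external offshell leg. For $n\ge 2$ the root vertex has some valence $k+1\ge 3$, and if the $k$ subtrees carry $m_1,\ldots,m_k$ onshell legs with $\sum m_i = n$, each subtree contributes $b_{m_i}(r_i)$ through its stem of momentum $r_i$, while the root vertex contributes $iv_{k+1}=if_{k+1}(x_p+\sum_i x_{r_i})+ig_{k+1}m^2$ per \cref{diff_vn}. Combining with the stem propagator $i/x_p$ and summing over the multinomial ways to assign the $n$ labeled onshell legs to the subtrees, the inductive hypothesis that each $b_{m_i}$ is a constant reduces the recursion to a relation among constants. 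Via \cref{faadibruno}, the corresponding functional equation for $B(t)$ then rearranges to $A(B(t))=t$, matching $A^{-1}$.

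The genuinely hard step is showing that the $m^2/x_p$ piece (from $g_{k+1}$) and the $\sum_i x_{r_i}/x_p$ piece (from $f_{k+1}$) conspire to cancel, so that $b_n$ is truly independent of masses and momenta. This cancellation uses essentially the algebraic relation $g_n=nf_n-c_{n-2}$ recorded after \cref{L_free_rho}, together with momentum conservation $\sum_j p_j=p$ and the onshell condition $p_j^2=m^2$ applied to the combinatorial sum $\sum_{\text{configs}}\sum_i x_{r_i}$ over placements of labeled legs into subtrees of fixed sizes. Once this cancellation is in hand, identifying the effective root-vertex coefficient with $-k!\,a_{k-1}$ in the recursion $B(t)=t-\sum_{k\ge 2}a_{k-1}B(t)^k$ is a direct Bell polynomial manipulation using the closed form $f_{k+1}=k!\,a_{k-1}+B_{k-1,2}(2!a_1,3!a_2,\ldots)$, and it is the only place where the specific form of the free Lagrangian $\mc L_\phi$ in \cref{Lfree} enters the argument.
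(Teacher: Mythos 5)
Your overall architecture is sound and matches how this result is actually established. The algebraic half of your argument --- that the right-hand side of the theorem equals $(n+1)!\,[t^{n+1}]$ of the compositional inverse of $\phi(\rho)$ --- is correct and is precisely the content of \cref{lem_inverse}: the identity $B_{n+k,k}(0,-2!a_1,\ldots)=\frac{(n+k)!}{n!}B_{n,k}(-1!a_1,\ldots)$ converts the stated formula into the Lagrange-inversion coefficients \cref{lagrange_inversion}. Your combinatorial half --- induction on $n$ via the decomposition at the root vertex, leading to the functional equation $A(B(t))=t$, equivalently $B(t)=t-\sum_{k\ge 2}a_{k-1}B(t)^k$ --- is the recursive strategy of the cited reference, and your final recursion is the right one (one checks $b_3=-a_1\cdot 2!\,B_{3,2}(b_1,b_2,\ldots)-a_2\cdot 3!\,B_{3,3}(b_1,\ldots)=12a_1^2-6a_2$, matching \cref{ex_bn}).

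The gap is that the cancellation you correctly flag as ``the genuinely hard step'' is the entire mathematical content of the theorem, and it is asserted rather than proved; moreover your description slightly misplaces where it happens. It is not a cancellation internal to a fixed root-vertex configuration: the leftover pieces $g_{k+1}m^2/x_p$ and $f_{k+1}\sum_i x_{r_i}/x_p$ cancel only \emph{across topologies of different root valence}. Concretely, for $b_3$ the term $4a_1^2 m^2/x_p$ coming from $g_4$ in the single-vertex tree cancels against the $m^2/x_p$ produced when one evaluates $\sum_{i<j}x_{i+j}=x_p+m^2$ (momentum conservation plus onshellness) in the sum over the three two-vertex trees, and the $B_{2,2}$ part of $f_4$ cancels against the residual $x_p/x_p$ piece of those same trees --- which is why the effective coefficient is $k!\,a_{k-1}$ and not $f_{k+1}$. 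To close the argument you must prove this for general $n$ and $k$: the needed identity is an evaluation of $\sum_{\text{configs}}\sum_i x_{r_i}$ over all placements of labeled legs into subtrees of fixed sizes, matched coefficient-by-coefficient against $g_{k+1}=({k+1})f_{k+1}-c_{k-1}$ and $B_{k-1,2}(2!a_1,\ldots)$ of the collapsed higher-valence vertex. (This is essentially \cref{diff_skalar_beliebig_lem_intern} in the $X$-formalism, where the bookkeeping is cleaner because no $m^2$ artifacts appear.) Alternatively, note that the paper reaches the same conclusion by a route that sidesteps this cancellation entirely: \cref{lem_inverse2} identifies $b_n$ with the inverse-series coefficients by comparing the two-point function computed from multiedge graphs with the Wick expansion of $\langle T\rho(x)\rho(y)\rangle$, at the cost of assuming tadpoles vanish.
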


\begin{definition}\label{def_Ajn}
	$A^j_n$ is the sum of all trees with a total of $n$ external edges, at most $j$ of which are offshell. 
\end{definition}
This definition implies that amplitudes with more than $j$ external edges offshell include $A^j_n$ as a summand or, in the other direction, amplitudes with less than $j$ external edges offshell can be extracted from $A^j_n$ by setting some of the $j$ edges onshell and then symmetrizing, see \cref{ex_A4}.
Unlike $b_n$, the tree sum $A^j_n$ does not include the propagators of external offshell edges.  

By \cref{thm_bn}, the $b_n$ are independent of masses and momenta. Since they include the propagator of the single offshell edge $\frac{i}{x_{1+2+\ldots + n}}$, the symmetric amputated  tree sum with $n+1$ external edges, at most one of which is offshell, is given by
\begin{align}\label{free_A1n} 
A^1_{n+1} :=-i\left( x_1 + x_2 + \ldots + x_n +  x_{1+2+\ldots + n}\right) b_n. 
\end{align}
In terms of graphs, this implies that the sum of all trees with a given number of external edges, one of which is offshell, effectively is a vertex with amplitude $A^1_{n+1}$.

\begin{lemma}\label{lem_Sn_free}
	If $A^0_n$ is the connected tree level Feynman amplitude with $n>2$ external edges, all of which are onshell from \cref{def_Ajn}, then
	\begin{align*} 
	A^0_n &=0.
	\end{align*}
\end{lemma}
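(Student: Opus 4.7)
The plan is to deduce $A^0_n = 0$ directly from the explicit closed form \cref{free_A1n} for $A^1_n$ by imposing the additional onshell constraint on the distinguished edge. By \cref{def_Ajn}, both $A^0_n$ and $A^1_n$ are sums over the same combinatorial set of trees with $n$ external edges; the superscript only restricts which kinematic configuration of the external momenta is admitted. Consequently $A^0_n$ is obtained from $A^1_n$ merely by requiring that the one potentially-offshell edge is also taken onshell.

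First I would specialize \cref{free_A1n} with $n$ in place of $n+1$, reading
\begin{align*}
A^1_n = -i\left(x_1 + x_2 + \cdots + x_{n-1} + x_{1+2+\cdots+n-1}\right) b_{n-1}.
\end{align*}
Next, I would invoke momentum conservation $p_1 + \cdots + p_n = 0$, which implies $x_{1+2+\cdots+n-1} = x_n$. Imposing the onshell conditions $x_1 = \cdots = x_{n-1} = 0$ together with the additional onshell requirement $x_n = 0$ then makes every term inside the bracket vanish. Since \cref{thm_bn} guarantees that $b_{n-1}$ is a momentum-independent constant, the product collapses to zero and the claim $A^0_n = 0$ follows at once.

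The main obstacle is essentially non-existent: once \cref{free_A1n} is available, the lemma is a one-line corollary, reflecting the fact that the amputated symmetric tree amplitude is proportional to a sum of offshell variables which is forced to vanish on the onshell locus. The only point that would deserve an explicit remark is the identification of $A^0_n$ as the onshell restriction of $A^1_n$, rather than as an independently-computed object; but this is immediate from \cref{def_Ajn} together with the observation that the diffeomorphism Feynman rules in \cref{diff_vn} are polynomial in the edge offshell variables, so the onshell specialization is unambiguous.
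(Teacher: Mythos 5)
Your proposal is correct and is essentially identical to the paper's own proof, which likewise sets all offshell variables to zero in \cref{free_A1n} and uses the momentum-independence of $b_n$ from \cref{thm_bn}. The extra remarks on momentum conservation and on identifying $A^0_n$ as the onshell specialization of $A^1_n$ only make explicit what the paper leaves implicit.
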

\begin{proof}
	Set all $x_j=0$ in \cref{free_A1n} to obtain the amplitude $A^0_{n+1}$ where none of the edges is offshell.
\end{proof}
A direct consequence of \cref{lem_Sn_free} is that if $j>0$ in \cref{def_Ajn} then at least one of the edges actually \emph{is} offshell as the summand $A^0_n$ contained in $A^j_n$ is zero.

Apart from their interpretation as Feynman amplitudes of tree sums, the $b_n$ also have an equally remarkable second interpretation: 
\begin{lemma}\label{lem_inverse}
	The tree sums $b_n$ are  coefficients of the inverse diffeomorphism $\rho(\phi)$ ,
	\begin{align*} 
	\rho(x) &=   \sum_{n=1}^\infty  \frac{b_n}{n!} \phi^n(x).
	\end{align*}
\end{lemma}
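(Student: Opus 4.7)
The plan is straightforward: write down the coefficients of $\rho(\phi) = \phi^{-1}(\phi)$ using Lagrange inversion \cref{lagrange_inversion} and match them against the closed-form expression for $b_n$ provided by \cref{thm_bn}. Since the diffeomorphism has $\phi_1 = a_0 = 1$ and $\phi_j = a_{j-1}$ for $j \geq 2$, the case $n=1$ is immediate from $(\phi^{-1})_1 = 1/\phi_1 = 1 = b_1$. For $n \geq 2$, specialising \cref{lagrange_inversion} gives
\begin{align*}
(\phi^{-1})_n &= \frac{1}{n!}\sum_{k=1}^{n-1} B_{n-1+k,\,k}\!\left(0,\,-2!\,a_1,\,-3!\,a_2,\,\ldots\right),
\end{align*}
while \cref{thm_bn} (with $n$ replaced by $n-1$) yields
\begin{align*}
b_n &= \sum_{k=1}^{n-1} \frac{(n-1+k)!}{(n-1)!}\, B_{n-1,\,k}\!\left(-1!\,a_1,\,-2!\,a_2,\,\ldots,\,-(n-1)!\,a_{n-1}\right).
\end{align*}
Therefore the lemma reduces to the termwise Bell-polynomial identity, valid for all $1 \leq k \leq m$,
\begin{align}\label{eq:plan_bell_identity}
B_{m+k,\,k}\!\left(0,\,-2!\,a_1,\,-3!\,a_2,\,\ldots\right) &= \frac{(m+k)!}{m!}\, B_{m,\,k}\!\left(-1!\,a_1,\,-2!\,a_2,\,\ldots\right).
\end{align}

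I would prove \eqref{eq:plan_bell_identity} using the exponential generating function \cref{bell_generating}. Introducing $\psi(t) := \sum_{s\geq 1} a_s t^s = \phi(t)/t - 1$, the argument sequences $x_s = -s!\,a_s$ appearing on the right of \eqref{eq:plan_bell_identity} and $y_s = -s!\,a_{s-1}$ (with $y_1 = 0$) appearing on the left have exponential generating functions $-\psi(t)$ and $-t\psi(t)$, respectively. Substituting into \cref{bell_generating} and extracting the coefficient of $u^k$ shows that both sides of \eqref{eq:plan_bell_identity} equal $(-1)^k\,\tfrac{(m+k)!}{k!}\,[t^m]\psi(t)^k$: the extra factor of $t^k$ hidden in $(-t\psi(t))^k$ shifts the coefficient extraction from $t^{m+k}$ to $t^m$, while the factorial normalisations in \cref{bell_generating} produce precisely the ratio $(m+k)!/m!$. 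Summing over $k$ then gives $(\phi^{-1})_n = b_n/n!$ as desired.

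The only nontrivial ingredient is the Bell-polynomial identity \eqref{eq:plan_bell_identity}; everything else is bookkeeping with index shifts. A quick sanity check at $m=k=1$ gives $B_{2,1}(0,\,-2a_1) = -2a_1 = \tfrac{2!}{1!}\, B_{1,1}(-a_1)$, in agreement with $b_2 = -2 a_1$ from \cref{ex_bn}.
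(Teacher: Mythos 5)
Your proof is correct and follows essentially the same route as the paper: apply Lagrange inversion, reduce the claim to the Bell-polynomial identity $B_{m+k,k}(0,-2!a_1,\ldots)=\tfrac{(m+k)!}{m!}B_{m,k}(-1!a_1,\ldots)$, and verify that identity by extracting coefficients from the exponential generating function \cref{bell_generating}. You simply carry out in detail the coefficient extraction that the paper leaves implicit.
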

\begin{proof}
	Compute the coefficients of the inverse diffeomorphism using Lagrange inversion \cref{lagrange_inversion}. Extracting coefficients from \cref{bell_generating}, one confirms
	\begin{align*} 
	B_{n+k,k} \left( 0,  -2! a_1, -3! a_2, \ldots \right)  &= \frac{(n+k)!}{n!}B_{n,k} \left( -1! a_1, -2! a_2, -3! a_3, \ldots \right).
	\end{align*}
\end{proof}

\subsection{Uncancelled edges as cuts}\label{subs_cuts}

The Feynman rules \cref{diff_vn} of the diffeomorphism vertices $iv_n$ are -- up to summands $m^2$ -- proportional to offshell variables $x_e$ of adjacent edges $e$. In a Feynman diagram, such edges come with a propagator \cref{free_propagator} with Feynman amplitude $\frac{i}{x_e}$. That is, the propagator together with the corresponding summand of the vertex evaluates to a constant. This is the mechanism underlying \cref{thm_bn}, the vertex effectively \emph{cancels} the edge $e$. Clearly, the possibility to cancel the edge depends on the presence of the summand $\propto x_e$ in the vertex Feynman rule, on the other hand, setting $x_e=0$ corresponds to the momentum flow through the edge $e$ being onshell.  The interplay between the notions of \emph{onshellness} and \emph{cancellation} is crucial for the subsequent discussion of tree level amplitudes.

Consider any   edge $e$ which connects two different tree sums $A^{j_1}_{n_1}, A^{j_2}_{n_2}$. Let $e$ have the offshell variable $x_e= p_e^2-m^2$. We observe the following:
\begin{enumerate}
	\item If the edge is cancelled by one of the adjacent tree sums, this tree sum includes a summand proportional to $x_e$. Thus, the edge $e$ is an offshell external edge from the point of view of this tree sum. 
	\item If the edge $e$ is not cancelled, then none of the two adjacent tree sums must contain a factor $x_e$. This can be obtained by formally setting $x_e=0$ in these tree sums, i.e. $e$ becomes an onshell external edge of the tree sums. 
\end{enumerate}
Setting $x_e=0$ for the factors does not mean that $x_e$ actually is zero for the whole, connected diagram, but just inside the two individual constituents. This is a purely formal procedure to eliminate the factors $x_e$ in the amplitudes $A^j_n$, consisting of two steps: First, setting $x_e=0$ removes the unwanted contributions, second, the remaining amplitude is interpreted as if the edge momentum $p_e$ were completely arbitrary.  The overall diagram where $e$ is uncancelled has an amplitude
\begin{align*} 
A^{j_1}_{n_1} \Big|_{x_e=0} \cdot \frac{i}{x_e}\cdot A^{j_2}_{n_2}\Big|_{x_e=0}.
\end{align*}
We noted below \cref{def_Ajn} that setting one of the $j_1$ external offshell edges onshell turns $A^{j_1}_{n_1}$ into $A^{j_1-1}_{n_1}$. This is not what we are doing here: In the present case, we set one \emph{specific} edge $e$ onshell whereas in the above case it was an arbitrary edge and the result is to be symmetrized. Especially, $A^{j_1}_{n_1} \big|_{x_e=0}$ is no longer symmetric in its $n$ external edges.

One might think that due to the momentum dependence of $A^j_n$, setting $x_e=0$ for an external edge $e$ does more than just eliminating the summands proportional to a power of $x_e$. This is  not the case as can be understood from the general structure of the Feynman rules:  Thanks to the absence of 2-valent vertices, no internal propagator in a tree has the Feynman rule $\frac{i}{x_e}$ and external propagators are not included into $A^j_n$. Hence there is no factor $\frac{1}{x_e}$ in $A^j_n$. There are of course factors from internal propagators of the form $\frac{1}{x_{e+f+\ldots}}$, where $p_f$ is some other momentum.  But those are unaffected, for example
\begin{align*} 
x_{e+f} \Big|_{x_e=0}  = \Big( \left( p_e + p_f \right) ^2-m^2  \Big)_{p_e^2=m^2}  = m^2 + 2 p_e p_f + p_f^2 -m^2 \neq  p_f^2-m^2=x_f.
\end{align*}
That is, the offshell variables $x_{e+f+\ldots}$ in the amplitude are not determined even if one (or more) of the involved momenta $p_e, p_f, \ldots$ are onshell.  Fixing the magnitude of the vectors imposes some constraint, e.g. the value of $p_e p_f$ is bounded by $m^2$ if both vectors have magnitude $m$, but the only thing important to us is that $x_{e+f}$ remains as an undetermined variable. If we in the second step remove the constraint $x_e=0$ then $x_{e+f}$ can again take completely arbitrary values but the factors $x_e$ in the numerator are gone. This way, it is possible to eliminate all numerator factors proportional to $x_e$ without otherwise altering the amplitude. See \cref{ex_A4,ex_bsn,ex_As4} for  explicit results, setting external edges onshell there does not symbolically alter the remainder of the amplitudes apart from eliminating terms proportional to the corresponding $x_e$.

If $\Gamma$ is a tree graph, we can consider any uncancelled internal edge $e$ as a cut: The resulting amplitude is the product of amplitudes of the individual components (this is always true for a tree graph), but these components have the same Feynman rule as if $e$ were an onshell external edge.  This phenomenon motivated the use of analytic properties of the $S$-matrix in \cite{velenich}.

\subsection{Offshell tree sums }\label{subs_offshell}
It is possible to obtain tree sums with more than one external edge offshell (\cref{def_Ajn}) from the $b_n$ as indicated in \cite[Sec.~4.2.1]{KY17}. To this end, the external onshell edges of multiple $b_k$ are glued. This leaves an uncancelled internal propagator, which, in the sum over all trees, becomes a symmetric sum of all partitions of the external edges. Especially, for $j>1$ the quantity $A^j_n$ depends on external masses and momenta.

\begin{example}\label{ex_A4}
	The amplitude with four external edges, all of which are possibly offshell, is
	\begin{align*} 
	A_4^4 	     &=   -ib_3 \left( x_1+x_2+x_3+x_4 \right)     -i b_2^2\left( \frac{(x_1+x_2)(x_3+x_4)}{x_{1+2}} +\text{ 2 more}  \right) .
	\end{align*}
	Here, $b_j$ are the free diffeomorphism tree sums (\cref{def_bn}) as always. \Cref{fig_A44} depicts the construction. The numerator factors $(x_1+x_2)(x_3+x_4)$ indicate that at least two of the four external edges need to be cancelled in order for an uncancelled internal edge to be possible.  Setting all $x_j=0$ except one and then symmetrizing  $j$ yields \cref{free_A1n} as indicated below \cref{def_Ajn},
	\begin{align*} 
	A^1_4 &= -i  b_3\left(x_1+x_2+x_3+x_4\right) .
	\end{align*}
	Note especially that $A^4_4 \equiv A^3_4 \equiv A^2_4$, i.e. there is no term of higher than second order in the external variables $x_j$. This is because of the Euler Characteristic \cref{euler_graph}, a tree with four external edges has at most two vertices and each vertex is linear in the offshell variables. Hence, the overall amplitude can at most be quadratic.  
\end{example}

\begin{figure}[h]
	\begin{tikzpicture}
	
	\node [] at (0,0) {$A^4_4\sim $};
	
	\node at (1,-.2) {$\sum\limits_{\text{4 Perm.}}$};
	\node[diffTree] (a) at (2.3,0){};
	\draw [thick] (a) to +(90:.5);
	\draw [-|, thick, bend angle=30, bend left] (a.240) to +(210:.5);
	\draw [-|, thick] (a.270) -- +(270:.5); 
	\draw [-|, thick, bend angle=30, bend right] (a.300) to +(330:.5);

	\node at (4,-.2) {$+\sum\limits_{\text{6 Perm.}}$};
	\node[diffTree] (a) at (5.3,.2){};
	\node[diffTree] (b) at (5.9,-.4){};
	\draw[thick, -|-, bend angle = 40, bend right] (a.290) to (b.160);
	\draw[thick, -|, bend angle = 20, bend left] (a.260) to +(220:.5);
	\draw[thick, -|, bend angle = 20, bend right] (b.190) to +(230:.5);
	\draw[thick] (b.0) to +(0:.4);
	\draw[thick] (a.90) to +(90:.4);

	\end{tikzpicture}
	\caption{Construction of $A^4_4$ by connecting tree sums $b_j$. The external propagators included in $b_j$ have not been taken into account graphically, hence   $\sim$ and not $=$. ``Perm.'' indicates permutations of external edges. The first sum runs over the four possibilities for one of the edges being the offshell edge of $b_3$, the second sum over all six possibilities to choose two out of four edges as the offshell ones.  }
	\label{fig_A44}
\end{figure}

\subsection{Tadpole graphs}

If $\Gamma$ is a Feynman graph, we denote the set of its internal edges by $E_\Gamma$ and the set of vertices by $V_\Gamma$. A vertex where one of the external edges of a diagram is attached is called external vertex.

\begin{definition}\label{def_tadpole}
	A tadpole graph is a Feynman graph where there is at least one closed path of edges which is connected to the rest of the diagram via at most one vertex. 
\end{definition}
In terms of Feynman integrals, this amounts to an integral over the corresponding loop momenta which is independent of any external momenta of the amplitude. Especially, we call a diagram a tadpole if there is at least one such loop, not only if it consists of a momentum-independent loop exclusively. A non-tadpole graph has at least two external vertices.
We assume that tadpole diagrams give no contribution to the $S$-matrix. This is the case in kinematic renormalization schemes  \cite{brown_angles_2011}.

By cancellation of internal edges, the diffeomorphism Feynman rules \cref{diff_vn} can turn a non-tadpole diagram into a tadpole by two mechanisms:
\begin{enumerate}
	\item They cancel all but one edges in any loop in the diagram. By this, the loop only contains a single vertex and is a tadpole.
	\item They cancel a path of edges which connects all external vertices. By cancellation, the path collapses to a single effective vertex where all external momenta are attached, that is, all integrals of the amplitude will be independent of external momenta. 
\end{enumerate}
Both effects are illustrated in \cref{fig_tadpoles} for a 2-loop example graph.

\begin{figure}[h]
	
	\centering
	\begin{tikzpicture}

	\node [diffVertex] (a) at (1.5,3) [] {};
	\node [diffVertex] (b) at (3,3.8) [] {};
	\node [diffVertex] (c) at (3,2.2) [] {};
	\node [diffVertex] (d) at (4.5,3) [] {};
	
	\draw [thick,-|-] (a) ..controls +(.4,.8) and +(-.4,0) ..  (b);
	\draw [thick,red] (a) ..controls +(.4,-.8) and +(-.4,0) .. (c);
	\draw [thick,-|-] (c)  -- (b);
	\draw [thick,-|-] (b) ..controls +(.4,0) and +(-.4,.8) ..  (d);
	\draw [thick,red] (c) ..controls +(.4,0) and +(-.4,-.8) ..  (d);
	
	\draw [thick,-] (a) --   +(180:.5);
	\draw [thick,-] (d) --  + (0:.5);
	\draw [thick,-] (c) --  + (270:.5);

	\node at (6,3){$\longrightarrow$};

	\node [diffVertex] (b) at (8.5,3.8) [] {};
	\node [diffVertex] (c) at (8.5,2.2) [] {};

	\draw [thick,-|-] (b) ..controls +(.7,-.2) and +(.7,.5) ..  (c);
	\draw [thick,-|-] (c)  -- (b);
	\draw [thick,-|-] (b) ..controls +(-.7,-.2) and +(-.7,.5) ..  (c);

	\draw [thick,-] (c) --   +(180:.5);
	\draw [thick,-] (c) --  + (0:.5);
	\draw [thick,-] (c) --  + (270:.5);

	\node [diffVertex] (a) at (1.5,6) [] {};
	\node [diffVertex] (b) at (3,6.8) [] {};
	\node [diffVertex] (c) at (3,5.2) [] {};
	\node [diffVertex] (d) at (4.5,6) [] {};
	
	\draw [thick,red] (a) ..controls +(.4,.8) and +(-.4,0) ..  (b);
	\draw [thick,-|-] (a) ..controls +(.4,-.8) and +(-.4,0) ..  (c);
	\draw [thick,red] (c)  -- node[left]{ }(b);
	\draw [thick,-|-] (b) ..controls +(.4,0) and +(-.4,.8) ..  (d);
	\draw [thick,-|-] (c) ..controls +(.4,0) and +(-.4,-.8) ..  (d);
	
	\draw [thick,-] (a) -- node[below]{ } +(180:.8);
	\draw [thick,-] (d) -- node[below]{ } + (0:.8);
	\draw [thick,-] (c) --  + (270:.5);
	
	\node at (6,6){$\longrightarrow$};
	
	\node [diffVertex] (a) at (8,6) [] {};
	
	\node [diffVertex] (d) at (10,6) [] {};

	\draw [thick,-|-] (a) ..controls +(.8,1.2) and +(-1,1.2) ..  (a);
	\draw [thick,-|-] (a) ..controls +(.4,.5) and +(-.4,.5) ..  (d);
	\draw [thick,-|-] (a) ..controls +(.4,-.5) and +(-.4,-.5) ..  (d);
	
	\draw [thick,-] (a) --  +(180:.8);
	\draw [thick,-] (d) --  + (0:.8);
	\draw [thick,-] (a) --  + (270:.5);

	\end{tikzpicture}
	\caption{Two different ways to obtain tadpoles from non-tadpole graphs by cancellation of internal edges. Cancelled edges are red, uncancelled ones have a perpendicular line. First row: cancellation of all but one edge in a loop, second row: Cancellation of a path connecting all external vertices.}
	
	\label{fig_tadpoles}
\end{figure}
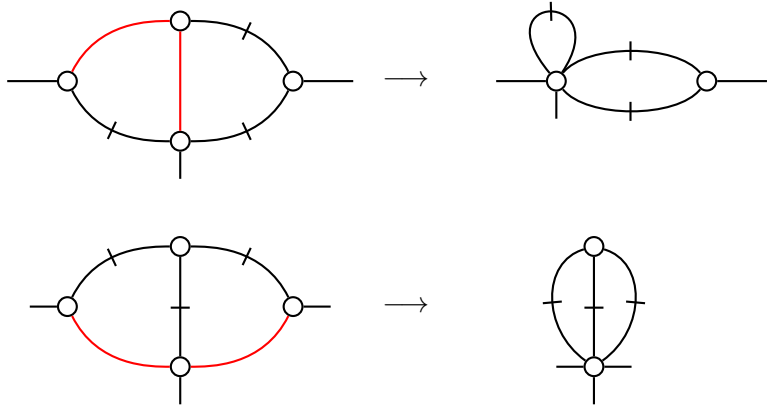

As a result of the above discussion, we have the following lemma:
\begin{lemma}\label{lem_tadpole}
	Assume $\Gamma$ is no tadpole graph, then
	\begin{enumerate}
		\item There are at least two uncancelled edges  in any closed path of edges in $\Gamma$.
		\item There is at least one external  vertex $v$ such that there is not a path of cancelled edges connecting $v$ to all other external vertices.
	\end{enumerate}
\end{lemma}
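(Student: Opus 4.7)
The plan is to prove both items by contrapositive, matching them to the two tadpole-producing mechanisms described immediately before the lemma.

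For item~(1), I would assume a closed path $L \subseteq \Gamma$ contains at most one uncancelled edge and derive that $\Gamma$ is a tadpole. Each cancelled edge $e \in L$ combines the propagator $i/x_e$ with the summand $i f_n x_e$ of one adjacent diffeomorphism vertex, see \cref{diff_vn}, to yield a constant, effectively contracting $e$ out of $L$. After contracting every cancelled edge of $L$, the loop collapses either to a single vertex (all edges of $L$ cancelled, so the remaining loop integration carries no propagators depending on external momenta) or to a self-loop at a single vertex (exactly one uncancelled edge in $L$). The latter is a tadpole directly by \cref{def_tadpole}, while the former is a tadpole by the momentum-independence characterization stated immediately after \cref{def_tadpole}.

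For item~(2), I would assume every external vertex of $\Gamma$ is joined to every other external vertex by a path of cancelled edges. Then the cancelled edges contain a connected subgraph spanning all external vertices, and contracting these edges fuses every external vertex into a single effective vertex $V^\ast$. All external momenta then enter $\Gamma$ only through $V^\ast$, so by overall momentum conservation the remaining loop integration variables decouple from the individual external momenta and the amplitude is independent of them. This is the tadpole condition in the sense recorded after \cref{def_tadpole}, contradicting the assumption on $\Gamma$.

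The main subtlety lies in justifying ``cancelled edge equals contracted edge'' at the level of the Feynman amplitude, because cancellation picks only the monomial $\propto x_e$ of \cref{diff_vn} rather than the full vertex factor. I would handle this by expanding each adjacent vertex amplitude as a sum over monomial choices (the $x_e$ pieces and the $m^2$ piece) and verifying that the specific choice corresponding to the assumed cancellation pattern produces a subamplitude of tadpole topology, which vanishes in the kinematic renormalization scheme adopted above the lemma. Both contrapositives therefore reduce to checking that the contracted topology satisfies \cref{def_tadpole} or its momentum-independent reformulation.
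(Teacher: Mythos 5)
Your proof is correct and takes essentially the same route as the paper: the lemma is stated there as an immediate consequence of the two tadpole-producing mechanisms listed just before it, and your argument is precisely the contrapositive of those two mechanisms, with the resulting effective tadpole contributions vanishing in the kinematic scheme supplying the contradiction. Your closing remark—that ``cancellation'' is really a choice of monomial in each vertex factor \cref{diff_vn}, so the lemma constrains only the non-vanishing monomial choices—is exactly the reading the paper relies on.
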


\FloatBarrier

\subsection{Loop amplitudes}\label{subs_free_loops}
We will see in the following that the tree sums $A^j_n$ from \cref{def_Ajn} act as building blocks of Feynman diagrams by the mechanism discussed in \cref{subs_cuts}. Since they take the role of $n$-valent vertices in these diagrams, we call them ``meta-vertices''. Note that this does not mean that  $A^j_n$ collapses to an actual vertex inside the Feynman diagram, it might still contain uncancelled internal edges if $j>1$.

\begin{lemma}\label{lem_loop}
	The Feynman integrand of the sum $G$ of all connected graphs with $l>0$ loops is obtained by building all $l$-loop graphs from  $n_k$-valent meta-vertices with Feynman amplitude $A^{j_k}_{n_k}$ from \cref{def_Ajn}. Here, $j_k$ is the number of external edges of $G$ connected to the meta-vertex $A^{j_k}_{n_k}$. Each pair of meta-vertices is connected by at least two internal edges.
\end{lemma}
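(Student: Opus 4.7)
The plan is to reorganize the standard Feynman-diagrammatic expansion of the $l$-loop part of $G$ by grouping together all ways in which cancellation (in the sense of \cref{subs_cuts}) can occur. First I split each connected $l$-loop diagram $\Gamma$ according to a labelling of its internal edges as \emph{cancelled} or \emph{uncancelled}: by \cref{diff_vn} each vertex $iv_n$ is a sum of summands, one proportional to each $x_e$ of an adjacent edge $e$ plus a mass piece; the labelling records, for each internal edge, whether one of its two vertex factors furnishes the $x_e$ summand that trivialises the adjacent propagator $i/x_e$. This refines the Feynman sum into a sum over pairs $(\Gamma,\text{labelling})$ of products of vertex pieces and uncancelled propagators.

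Next I consider the subgraph $T\subseteq\Gamma$ formed by the cancelled edges. Since tadpoles are assumed to vanish, only non-tadpole $\Gamma$ contribute, and \cref{lem_tadpole} then forces every cycle of $\Gamma$ to contain at least two uncancelled edges; in particular no cycle lies entirely in $T$, so $T$ is a forest. Its connected components $T_1,\ldots,T_r$ are trees of diffeomorphism vertices. Each $T_k$ has $n_k$ external legs consisting of the external legs of $G$ attached to $T_k$ (these are offshell; call their number $j_k$) together with the uncancelled internal edges of $\Gamma$ having one endpoint in $T_k$ (these are treated with $x_e=0$ inside $T_k$ by the argument of \cref{subs_cuts}, i.e.\ as onshell external legs of $T_k$). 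Summing over all diffeomorphism-vertex contents of $T_k$, all tree topologies consistent with its boundary, and all cancellation labellings of its internal edges reproduces exactly the meta-vertex amplitude $A^{j_k}_{n_k}$ of \cref{def_Ajn}.

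Before declaring the decomposition final I enforce that any two meta-vertices be joined by at least two uncancelled edges. If components $T_k$ and $T_{k'}$ were connected by a single uncancelled edge $e$, then $T_k\cup\{e\}\cup T_{k'}$ is again a tree of diffeomorphism vertices; reassigning $e$ from uncancelled to cancelled inside this enlarged tree yields the same product of vertex factors (the summand trivialising $i/x_e$ is simply transferred from one of its endpoints to the other). Hence every pair joined by a single uncancelled edge is already counted inside a larger tree sum, and iterating the merger yields a representation in which every pair of meta-vertices shares at least two uncancelled edges. The skeleton multigraph obtained by contracting each $T_k$ to a point still has $l$ loops by \cref{euler_graph}. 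This is precisely the advertised decomposition: $l$-loop graphs built from $n_k$-valent meta-vertices $A^{j_k}_{n_k}$, joined by uncancelled propagators, with $\ge 2$ propagators between any adjacent pair.

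The main obstacle is the bookkeeping in the merging step: one must check that summing separately over ``single-edge-connected pair of meta-vertices'' and summing the merged meta-vertex produce exactly the same total without overcounting, for all choices of external-leg partitions and all internal topologies. This reduces to comparing the two possible placements of the $x_e$ summand trivialising the shared edge, and to verifying that the symmetrisation of external legs implicit in the definition of $A^{j}_{n}$ correctly absorbs the permutation factors created by the merger.
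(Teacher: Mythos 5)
Your proposal is correct and follows essentially the same route as the paper: decompose each non-tadpole graph by the cancellation status of its internal edges, use \cref{lem_tadpole} to conclude the cancelled edges form a forest whose components are the tree sums $A^{j_k}_{n_k}$, and absorb any pair of components joined by a single uncancelled edge into the larger tree sum $A^{j_1+j_2}_{n_1+n_2-2}$. The symmetry-factor bookkeeping you flag as the main remaining obstacle is exactly the point the paper delegates to the cited Prop.~4.1 of \cite{KY17} rather than re-deriving.
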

\begin{proof}
	The proof consists of three steps: First, we show that any non-tadpole graph can be decomposed into trees. Second, we specify how the individual components turn into tree sums $A^j_k$ acting as meta-vertices when summing over all graphs. Third, we show that no pair of meta-vertices must be connected by only a single internal edge. 
	
(1): Let $\Gamma$ be any non-tadpole Feynman graph according to \cref{def_tadpole}. Then by \cref{lem_tadpole} there exists at least one set of uncancelled edges $U\subseteq E_\Gamma$. Let $U$ be such a set with minimum number of edges. Identify the uncancelled edges as cuts following \cref{subs_cuts}. These cuts divide $\Gamma$ into connected components $\Gamma_1, \ldots, \Gamma_n$. By \cref{lem_tadpole}, $n\geq 2$ and no $\Gamma_k$ contains loops, i.e. $\Gamma_k$ are trees. Let $n_k$ be the number of external edges of $\Gamma_k$. Let further $j_k$ be the number of external offshell edges of $\Gamma_k$, i.e. which were external edges of the uncut graph $\Gamma$. 

(2): Now sum  all possible ways $U$ of cutting $\Gamma$ with the minmal possible number of cuts and for each cut, replace all  $\Gamma_k$   by the corresponding tree sum $A^{j_k}_{n_k}$ from \cref{def_Ajn}. By \cite[Prop.~4.1]{KY17}, this sum equals the sum of all connected graphs with $\abs{\Gamma}$ loops, weighted with their correct symmetry factors up to an overall factor. In other words: The total non-tadpole integrand of an amplitude with $\abs{\Gamma}$ loops is the sum of all possible ways of connecting the $(n_k-j_k)$ onshell edges of tree sums $A^{j_k}_{n_k}$. Thereby, these tree sums act as meta-vertices.  No two edges of the same meta vertex $A^{j_k}_{n_k}$ must be connected to each other and the $j_k$ offshell edges become external edges of $\Gamma$.  In case $j_k=0$ for some $k$, that meta vertex vanishes by \cref{lem_Sn_free}. Note that regardless of \cref{def_Ajn}, if $j_k>0$ then at least one of those edges actually \emph{is} offshell because if all are onshell the amplitude is zero by \cref{lem_Sn_free}. Hence each non-vanishing contribution to $\Gamma$ has at least one external edge at each meta vertex.

(3): Assume there is a pair of meta-vertices $A^{j_k}_{n_k}$, $k\in \{1,2\}$ connected by only a single internal edge $e$.  Since $\Gamma$ is no tree, there is at least one other path between these meta-vertices, involving yet another vertex. This forms a loop with at least 3 uncancelled edges. So $e$ can be cancelled without making $\Gamma$ a tadpole, i.e. $U$ does contain more than the minimum necessary number of cuts. Indeed, $A^{j_1}_{n_1} \frac{i}{x_e}A^{j_2}_{n_2}$ together with the case where $e$ is cancelled form the tree sum $A^{j_1+j_2}_{n_1+n_2-2}$ which is taken into account when a cut $U$ without cutting $e$ is used. 
\end{proof}

Usually, if in a Feynman graph two external momenta enter at the same vertex $v$, they can be added to yield a single effective momentum. This is not the case here: A meta-vertex $A^{j>1}_n$  has internal uncancelled edges, so if two external momenta are entering $A^j_n$, they can not be combined into only a single external edge. Only if $j=1$, the meta-vertex $A^1_n$ has no internal structure by \cref{thm_bn}.

\begin{example}
	Consider $A^2_4=A^4_4$ from \cref{ex_A4}. Assume for simplicitly the two external offshell edges are numbered 1 and 2 and the remaining two edges are onshell, then this meta-vertex has an amplitude 
		\begin{align*} 
	A_2^4 \Big|_{x_3=0=x_4}      &=   -ib_3 \left( x_1+x_2  \right)     -i b_2^2\left( \frac{x_1 x_2 }{x_{1+3}} +\frac{x_1x_2}{x_{1+4}}   \right) .
	\end{align*}
	The second summand contains an uncancelled propagator and is proportional to $x_1x_2 = (p_1^2-m^2)(p_2^2-m^2)$ whereas the first summand is proportional to $p_1^2 + p_2^2-2m^2$. It is not possible to reproduce this Feynman amplitude if one ``combines'' both momenta into some effective momentum $p:= p_1+p_2$ . So if the external momenta $p_1$ and $p_2$ are incident to the meta-vertex $A^2_4$ then they have to stay distinct. 
	
\end{example}

\Cref{lem_loop} implies that an amplitude with $k$ external offshell edges contains at most $k$ meta-vertices $A^j_n$. On the other hand, by \cref{lem_tadpole} it contains at least two such vertices in order to not be a tadpole. Hence, the diffeomorphism  does not contribute to the  onshell amplitudes, i.e. the $S$-matrix \cite[Thm.~4.7]{KY17} and graphs with two external edges have the topology of $l$-loop multiedges \cite[lem.~4.8]{KY17} with vertices $A^1_n\sim b_{n-1}$.  This implies an alternative proof for \cref{lem_inverse}: 
\begin{lemma}\label{lem_inverse2}
	The coefficients $b_n$ of the inverse 
	\begin{align*}
	\rho(x) &= \sum_{n=1}^\infty \frac{b_n}{n!}\phi^n(x)
	\end{align*}
	of the diffeomorphism \cref{def_diffeomorphism} are the Feynman amplitudes of amputated meta-vertices $A^1_{n+1}$ (i.e. of the tree sums $b_n$ as defined in \cref{def_bn}).
\end{lemma}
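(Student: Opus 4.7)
The plan is to deduce the inversion directly from the multiedge structure of 2-point graphs established in Lemma~\ref{lem_loop}, rather than from a Lagrange inversion argument. The idea is to compute the connected 2-point function of $\rho$ in two complementary ways and match.

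On the $\rho$-theory side, Lemma~\ref{lem_loop} applied to two external edges forces the connected 2-point function at $l\geq 1$ loops to consist of exactly two amputated meta-vertices $A^1_{l+2}$ joined by $l+1$ internal propagators, i.e.\ an $l$-loop multiedge. Each such meta-vertex has one offshell external edge (carrying the external momentum) and $l+1$ onshell internal legs that are glued together. By \cref{free_A1n}, the amplitude of $A^1_{l+2}$ is governed entirely by the tree sum $b_{l+1}$ of \cref{def_bn}, so the full $l$-loop contribution is a fixed numerical function of $b_{l+1}$ and the propagator $iG$.

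On the composite-operator side, one writes the inverse of the diffeomorphism \cref{def_diffeomorphism} in its as-yet-unknown form $\rho(x)=\sum_{n\geq 1}(c_n/n!)\,\phi^n(x)$, where the $c_n$ are the unknowns to be identified. Because the underlying $\phi$-theory is free, the connected two-point function $\langle\rho(x)\rho(y)\rangle$ is just a Wick expansion of composite operators; discarding tadpole contractions (self-contractions at coincident points), the only surviving connected contribution at each order pairs every $\phi$-factor at $x$ with exactly one $\phi$-factor at $y$, giving $\sum_{n\geq 1}(c_n^2/n!)(iG(x-y))^n$. This has exactly the multiedge topology that Lemma~\ref{lem_loop} produced on the other side.

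Matching the two expressions order-by-order in the number $n$ of internal propagators identifies $c_n$ with $b_n$, once one verifies that the combinatorial and sign factors line up; this verification is a short bookkeeping exercise using \cref{free_A1n} and the symmetry factor of the $n$-edge multiedge. The main obstacle is justifying that the two computations must yield the same 2-point function in the first place: this is not a tautology, but rather a genuine input that rests on Lemma~\ref{lem_loop} together with the cancellation mechanism of \cref{subs_cuts}, which together show that the diffeomorphism vertices reorganize the $\rho$-theory diagrammatics into precisely the structure produced by treating $\rho$ as a composite operator in the free $\phi$-theory. Granted that, no appeal to Lagrange inversion is needed.
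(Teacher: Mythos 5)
Your proposal is correct and follows essentially the same route as the paper: the paper's proof also computes the time-ordered 2-point function of $\rho$ once via the $l$-loop multiedge decomposition with meta-vertices $A^1_{l+2}$ and once by Wick-expanding the inverse series $\rho=\sum_n \frac{b_n}{n!}\phi^n$ with unknown coefficients in the free $\phi$-theory, discards tadpole (coincident-point) contractions to force $j=k$, and matches the symmetry factors $\frac{1}{(l+1)!}$ against the $k!$ pairings to read off $A^1_{l+2}\big|_{x_p\neq 0}=-ix_p\,b_{l+1}$. The bookkeeping you defer is exactly the step the paper carries out explicitly, so nothing further is needed.
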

\begin{proof}
	Consider the non-amputated time-ordered 2-point correlation function of $\rho$. By the above discussion, it is supported on  $l$-loop multiedge graphs $M^{(l)}$ where the two vertices are meta-vertices $A^1_{l+2}$ (we denote by $M^{(l)}$ the Feynman amplitude of the amputated $l$-loop multiedge). The single offshell edge of said vertices carries momentum $p$ and the graph $M^{(l)}$ has a symmetry factor  $\frac{1}{(l+1)!}$,
	\begin{align*}
	\langle \rho(-p) \rho (p) \rangle = \langle \phi(-p) \phi(p) \rangle + \frac{i}{x_p }\sum_{l=1}^\infty \left( A^1_{l+2} \right) ^2 \cdot \frac{1}{(l+1)!}M^{(l)} (p) \frac{i}{x_p}.
	\end{align*} 
	On the other hand, using the inverse diffeomorphism in position space (where we take $b_n$ to be the  unknown coefficients of the inverse) the same  function is
	\begin{align*}
	\langle T\rho(x) \rho(y)\rangle &= \sum_{j=1}^\infty \sum_{k=1}^\infty \frac{b_j b_k}{j! k!} \langle T\phi^j(x) \phi^k(y) \rangle.
	\end{align*}
	The right hand side are $(j+k)$-point correlation functions of the field $\phi$ which are computed via Wick's theorem, i.e. all factors $\phi$ of the field have to be connected in pairs which eliminates all summands were  $j+k$ is odd. If $j+k$ is even, any term where two fields at the same spacetime point are connected correspond - after Fourier transformation - to a tadpole graph which is assumed to vanish. Hence all non-vanishing pairs have to be of the form $\phi(x) \phi(y)$ which implies $j=k$,  
	\begin{align*}
	\langle T\rho(x) \rho(y)\rangle &= \langle T\phi(x) \phi(y) +  \sum_{k=2}^\infty \frac{\left( b_k\right)^2}{(k!)^2 } \langle{T\phi^k(x) \phi^k(y)} \rangle .
	\end{align*}
	There are precisely $k!$ equivalent ways of forming pairs which cancels one factor in the denominator. The resulting sum equals after Fourier transformation the sum over all $(k-1)$-loop multiedges, weighted with their correct symmetry factor $\frac{1}{k!}$,
	\begin{align*}
	\langle \rho(p) \rho(-p)\rangle &= \langle \phi(-p) \phi(p) \rangle+ \sum_{k=2}^\infty \frac{\left( b_k\right)^2}{k!} M^{(k-1)}(p).
	\end{align*}
	Comparing coefficients yields the claimed equality
	\begin{align*}
	A^1_{l+2}\big|_{x_1=x_p\neq 0}   &= -ix_p b_{l+1}.
	\end{align*}
	Note that this is not \cref{free_A1n}: In the latter, $A^1_n$ was defined using tree sums $b_n$ whereas here $b_{l+1}$ is a diffeomorphism coefficient which we showed to coincide with a tree sum.
\end{proof}

\begin{example}
	As an example for the proof, consider $c_1(p)$, the first non-tadpole contribution to the 2-point function of $\rho$. It is a 1-loop multiedge built from two 3-valent vertices $iv_3$ from \cref{diff_vn}. These vertices have only one external edge which has the offshell variable $x_p$. Their amplitude therefore is $A^1_3$ with a fixed (non-symmetric) offshell edge.  We consider non-amputated graphs so there are also the external propagators $\frac{i}{x_p}$ and the overall momentum-space amplitude is
	\begin{align*} 
	c_1(p):=\frac 12 \left( \frac{i}{x_p} \right) ^2\left( A^1_3 \big|_{x_p\neq 0} \right) ^2 \int \d^4 k \; \frac{i}{x_k} \frac{i}{x_{p-k}} = \frac 12\left( \frac{i A^1_3|_{x_p\neq 0} }{x_p} \right)    \int \d^4 k \frac{1}{x_k x_{p-k}}.
	\end{align*}
	Define the prefactor to be $A$. 
	In position space, the same amplitude by Fourier transformation is
	\begin{align*} 
	\tilde c_1(x) &= \int \d^4 p \; c_1(p) e^{ipx} = \frac 12 A \iint \d^4 p \d^4 k \; \frac{1}{x_k}\frac{1}{x_{p-k}}e^{ipx}= \frac 12 A\int \d^4 q  e^{iqx} \frac{1}{x_q}\int \d^4 k \; \frac{1}{x_k}e^{ikx}.
	\end{align*}
	This is the product of two propagators between the same two points in spacetime. That is, $\tilde c_1(x)$ corresponds to a Wick contraction of $\phi^2(y) \phi^2(y+x)$ into two pairs, $\left( \phi(y) \phi(y+x) \right) \cdot \left( \phi(y) \phi(y+x) \right)   $. 
	
	If $b_2$ is the second coefficient of the inverse diffeomorphism, i.e. $\rho = \phi + \frac 12 b_2 \phi^2 + \ldots$, then	 we expect this amplitude $\tilde c_1(x)$ to be the product of two position-space propagators with the prefactor $	2 \left( \frac{1}{2}b_2 \right) ^2$. The additional 2 arises from two possibilities to Wick-contract the fields. This prefactor has to be $A$ so we can read off that $A^1_3\big|_{x_p\neq 0} = -ix_p b_2$, that is, the diffeomorphism coefficient $b_2$ coincides with the 3-valent tree sum as claimed.
	
\end{example}

Using Lagrange inversion \cref{lagrange_inversion} to determine the coefficients of the inverse diffeomorphism, \cref{lem_inverse2} yields the explicit formula \cref{thm_bn} for the tree sums \cref{def_bn}. Note that \cref{lem_inverse2} - unlike \cref{thm_bn} - explicitly uses the fact that tadpole graphs vanish.

\section{Local diffeomorphisms of an interacting field}\label{subs_int}

\subsection{Feynman rules}
In this section, the diffeomorphism \cref{def_diffeomorphism} is applied to a $\phi^s$-type interacting field, i.e. 
\begin{align}\label{L_phis}
\mc L_\phi  &= \frac 12  \partial_\mu \phi ( x)\partial^\mu \phi (  x) -\frac 12 m^2 \phi^2( x) -\frac{\lambda_s }{s!}\phi^s( x).
\end{align}
The coupling constant $\lambda_s$ has an index $s$  to better keep track of the type of interaction.

Since the first part of \cref{L_phis} coincides with \cref{Lfree}, the field $\rho$ obtains the same vertices $i v_n$ from \cref{diff_vn} as in the free case. Additionally, the interaction monomial in \cref{L_phis} gives rise to a second type of vertex with Feynman rule
\begin{align*}
- iw^{(s)}_n &= -i\frac{\lambda_s }{s!}n! \underbrace{\sum_{j=0}^{ n-s} \sum_{k=0}^{n-s-j} \cdots \sum_{l=0}^{n-s-j-k-\ldots}}_{s-1 \text{ sums}} \underbrace{a_j a_k \cdots a_l a_{n-s-j-k-\ldots -l}}_{s \text{ factors }a} .
\end{align*}
The vertex Feynman rule is by definition $n!$ times the coefficient of $\rho^n$ of the power series $ \frac{\lambda_s}{s!} \phi^s(\rho)$. Hence using \cref{faadibruno}, Bell polynomials allow for a condensed notation of this multisum:
\begin{align}\label{diff_arb_wns}
-i w^{(s)}_n &=- i\lambda_s B_{n,s}(1! a_0, 2! a_1, 3! a_2, 4! a_3, \ldots ), \qquad w_n^{(s)}=0 \ \forall  n<s.
\end{align}
Again, as the diffeomorphism \cref{def_diffeomorphism} is tangent to identity, it reproduces the lowest order term, i.e. there is a vertex $-i w^{(s)}_s = -i\lambda_s$ and no vertex $-iw^{(s)}_n$ where $n<s$. As an illustration, the diffeomorphism vertices of underlying $\phi^3$ and $\phi^4$-theory are shown in \cref{phis_vertex_picture}.

\begin{figure}[htbp]
	\centering
	\begin{tikzpicture}

	\node at (-2.5,1)[anchor=west,text width =6cm] { underlying $\phi^3$ theory $(s=3)$};
	\node at (5,1)[anchor=west,text width =6cm]  { underlying $\phi^4$ theory $(s=4)$};

	\node at (-1.5,0) {$-iw^{(3)}_3=$};
	
	\node [intVertex]  (c) at (0,0) {} ;
	
	\draw [thick] (c) -- + (90:.5);
	\draw [thick] (c) -- + (210:.5);
	\draw [thick] (c) -- + (330:.5);
	
	\node at (.5,0) [anchor=west,text width = 4cm] {$=-i\lambda_3 $};

	\node at (-1.5,-1.5) {$-iw^{(3)}_4=$};
	
	\node [intVertex]  (c) at (0,-1.5) {} ;
	
	\draw [thick] (c) -- + (45:.5);
	\draw [thick] (c) -- + (135:.5);
	\draw [thick] (c) -- + (225:.5);
	\draw [thick] (c) -- + (315:.5);
	
	\node at (.5,-1.5)  [anchor=west,text width = 4cm]{$=  - 12 i \lambda_3  a_1 $};

	\node at (-1.5,-3) {$-iw^{(3)}_5=$};
	
	\node [intVertex]  (c) at (0,-3) {} ;
	
	\draw [thick] (c) -- + (90:.5);
	\draw [thick] (c) -- + (162:.5);
	\draw [thick] (c) -- + (234:.5);
	\draw [thick] (c) -- + (306:.5);
	\draw [thick] (c) -- + (18:.5);
	
	\node at (.5,-3) [anchor=west,text width =4cm]{$= -60 i \lambda_3 \left( a_2 + a_1^2\right)  $};

	\node at (-1.5,-4.5) {$-iw^{(3)}_6=$};
	
	\node [intVertex]  (d) at (0,-4.5) {} ;
	
	\draw [thick] (d) -- + (30:.5);
	\draw [thick] (d) -- + (90:.5);
	\draw [thick] (d) -- + (150:.5);
	\draw [thick] (d) -- + (210:.5);
	\draw [thick] (d) -- + (270:.5);
	\draw [thick] (d) -- + (330:.5);
	
	\node at (2,-4.3) {\vdots};
	
	\node at (5,0)[anchor=west,text width =5cm] {$-iw^{(4)}_3=0$};

	\node at (5,-1.5)[anchor=west,text width =5cm] {$-iw^{(4)}_4=  - i\lambda_4$};

	\node at (5,-3)[anchor=west,text width =5cm] {$-iw^{(4)}_5= -20 i \lambda_4 a_1$};

	\node at (5,-4.5)[anchor=west,text width =6cm] {$-i w^{(4)}_6=- 60 i\lambda_4 \left( 2 a_2 + 3 a_1^2\right)  $};

	\end{tikzpicture}
	\caption{Diffeomorphism-interaction vertices according to \cref{diff_arb_wn}. The graphical representation is black dots to distinguish them from "pure" diffeomorphism vertices \cref{free_vertex_picture}.}
	\label{phis_vertex_picture}
\end{figure}

Analogous to \cref{L_free_rho}, the $w^{(s)}_n$ from \cref{diff_arb_wn} can be interpreted as coupling constants, namely the Lagrangian density of $\rho$ reads 
\begin{align} \label{L_phis_rho}
\mc L_\rho &= -  \sum_{n=1}^{\infty}  \frac{f_{n+1}}{n!} \rho^{n}(x) \partial_\mu \partial^\mu \rho (x)   + m^2\sum_{n=2}^{\infty} \frac{  c_{n-2}}{n!} \rho^{n}(x) -\sum_{n=s}^\infty \frac{w^{(s)}_n }{n!}\rho^n(x).
\end{align}
 
The above construction works equivalently in the case of multiple interaction terms in the original Lagrangian density,
\begin{align}\label{L_arb}
\mc L_\phi  &= -\frac 12 \phi (  x) \partial_\mu \partial^\mu \phi ( x) -\frac 12 m^2 \phi^2(  x) -\sum_{s=3}^\infty \frac{\lambda _s}{s!}\phi^s( x).
\end{align}
Each interaction monomial gives rise to a new family of interaction vertices \cref{diff_arb_wns}. 
The total contribution to the $n$-valent interaction vertex is the sum of all $-iw_n^{(s)}$ where $s \leq n$,
\begin{align}\label{diff_arb_wn}
-i w_n &:= -i\sum_{s=3}^n w_n^{(s)} = -i \sum_{s=3}^n \lambda_s B_{n,s} \left( 1! , 2! a_1, 3! a_2, \ldots  \right) .
\end{align}

Together with the diffeomorphism vertex $iv_n$ from \cref{diff_vn}, the most general form of a  $n$-valent vertex, which can arise through local diffeomorphism of a scalar theory \cref{L_arb}, is
\begin{align}\label{arb_generalvertex}
&iv_n-i w_n    \\
&= i \Big(B_{n-2,1} (  2! a_1, 3! a_2, \ldots) + B_{n-2,2} (2! a_1, 3! a_2, \ldots )\Big)\left( x_1 + \ldots + x_n  \right) \nonumber  \\
&\quad + i m^2\Big( n B_{n-2,1} (  2! a_1,  \ldots) + nB_{n-2,2} (2! a_1, \ldots )  -B_{n,2} \left( 1! a_0, 2! a_1, \ldots  \right)\Big)  \nonumber \\
&\quad  -i \sum_{s=3}^n \lambda_s B_{n,s} \left( 1! , 2! a_1, 3! a_2, \ldots  \right). \nonumber 
\end{align}
Note that - althought there are two possibly infinite sets of free parameters $\{\lambda_s\}_{s \geq 3}$ and$\{a_j \}_{j\geq 1}$ - the momentum dependence of this vertex is very restricted. This fact is a consequence of locality of the diffeomorphism: It does not introduce additional derivatives, and since the underlying Lagrangian  \cref{L_arb} only contains up to second derivatives, no local diffeomorphism can produce a vertex with Feynman rule of higher than second order in momenta. Compare \cref{subs_nonlocal}.

\subsection{Cancellation of higher interaction vertices}\label{subs_cancellation_higher}

As soon as interaction is present in the underlying Lagrangian, by \cref{diff_arb_wn} an infinite set of new interaction vertices is generated by the diffeomorphism. Consider the sum of all tree graphs with $n$ external edges  instead of only the vertex $-i w_n^{(s)}$, then almost all contributions cancel if the external edges are onshell. The only remaining terms arise from trees build of $s$-valent vertices $-i \lambda_s$. For underlying $\phi^4$-theory, this statement is  \cite[thm. 5.1]{KY17}. Note also in \cite{velenich} this cancellation is discussed based on properties of the $S$-matrix, here we instead use explicit identities for the Feynman rules.

\begin{example} Consider the tree sum with $n=4$ external edges in the diffeomorphism of $\phi^3$ theory, $s=3$. We restrict to terms proportional to $\lambda_3$, these are trees which contain precisely one vertex of type $-iw^{(3)}_j$ (and allother vertices are of diffeomorphism type $iv_k$). There are two contributions: 
	~ \\
	
	{
		\centering
		\begin{tikzpicture}  
		
		\node at (-1.5,0) {$S^{(3)}_4=$};

		\node [intVertex]  (c) at (0,0) {} ;
		
		\draw [thick, -|] (c) -- + (45:.5);
		\draw [thick, -|] (c) -- +(135:.5);
		\draw [thick, -|] (c) -- +(225:.5); 
		\draw [thick, -|] (c) -- +(315:.5); 
		
		\node at (0,-.9){$\underbrace{\qquad \quad  }_{S^{(3)}_{4,4}}$};

		\node at (1.8,-.15) {$+ \quad \sum \limits_{\text{Perm.}} $};
		\node [intVertex] (c1) at (3.5,.3) {};
		\node [diffTree] (c2) at (3.5,-.2) {};
		\draw [thick] (c1) -- (c2);
		\draw [thick, -|] (c1) -- + (45:.5);
		\draw [thick,-|] (c1) -- +(135:.5);
		\draw [thick,bend angle=30, bend left, -|] (c2.250)  to +(225:.5); 
		\draw [thick,bend angle=30, bend right, -|] (c2.290)  to +(315:.5);

		\node at (3.5,-1.2){$\underbrace{\hspace{1cm}   }_{S^{(3)}_{4,3,\text{single}}}$};

		\end{tikzpicture}
		
	}
	$S^{(3)}_{4,4}$ is the 4-valent interaction vertex
	\begin{align*}
	S^{(3)}_{4,4} = -iw^{(3)}_4 = -12i\lambda_3 a_1.
	\end{align*}
	 The second contribution to $S_4$ is a sum over trees consisting of one interaction vertex and a tree sum $b_2$ (which happens to be the vertex $iv_3$ here). There are six different permutations of the external edges, one of them is 
	\begin{align*}
	S^{(3)}_{4,3,\text{single}} &= -i\lambda_3 \cdot \left( -2a_1 \right)   = 2i\lambda_3 a_1.
	\end{align*}
	This does not depend on external momenta, therefore summing the six permutations produces
	\begin{align*}
	S^{(3)}_{4,3} &= 6 \cdot S^{(3)}_{4,3,\text{single}} = 12i\lambda_3 a_1.
	\end{align*}
	Hence, the total amplitude proportional to $\lambda_3$ is
	\begin{align*}
	S^{(3)}_4 &= S^{(3)}_{4,4} + S^{(3)}_{4,3} =0.
	\end{align*}
\end{example}

Now let $s\geq 3$ be arbitrary but fixed. 
We will show the analogue of \cref{lem_Sn_free} for the interacting case, namely that all $S^{(s)}_n$ for $n >s$ are zero. This is based on two observations: 
\begin{enumerate}
	\item Any interaction vertex $-i w^{(s)}_n$ is of order one in $\lambda_s$. Hence, it has to be cancelled against trees which again contain a single vertex of $-i w^{(s)}_j$ type and the remaining vertices are of pure diffeomorphism type $i v_j$.  
	\item An interaction vertex $-iw^{(s)}_n$ does not cancel adjacent propagators, therefore such vertex can only be cancelled against a tree sum which also does not cancel propagators. Especially, one can require all $n$ external edges to be on-shell.  
\end{enumerate}
Euler characteristic \cref{euler_graph} ensures compatibility of these requirements: A tree $T$ with $\abs{V_T}$ vertices has $\abs{V_T}-1$ internal edges, and if one of the vertices is of $-i w^{(s)}_j$ type, the remaining $\abs{V_T}-1$ vertices of $i v_j$ type precisely suffice to cancel all internal edges but no external one. Summing over all possible trees and permutations of external edges, $S_n$ consists of summands where one vertex $-iw^{(s)}_j$ is connected to $j$ tree sums $b_{k_1}, \ldots, b_{k_j}$ such that $k_1 + \ldots + k_j=n$. This is shown in \cref{Sn_image} for $s=3$ (for other $s$, the sum would end at an $s$-valent interaction vertex on the right side).

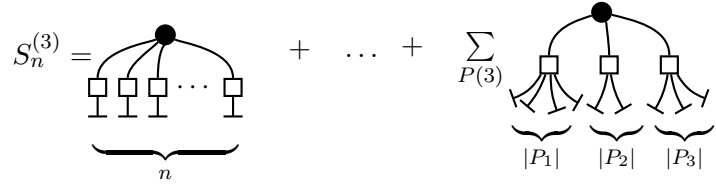
\begin{figure}[htbp]
	\centering
	\begin{tikzpicture}

	\node at (0,0) {$S^{(3)}_n=$};
	
	\node [intVertex] (c1) at (1.5,.2) {};
	\node [diffTree] (v1) at (.6, -.5){};
	\node [diffTree] (v2) at (1, -.5){};
	\node [diffTree] (v4) at (1.4, -.5){};
	\node   at (1.9, -.5){$\cdots $};
	\node [diffTree] (v3) at (2.4, -.5){};
	
	\draw [thick ] (c1) .. controls +(-.2,-.1) and +(0,.5).. (v1);
	\draw [thick ] (c1)  .. controls +(-.2,-.2) and +(0,.3)..(v2);
	\draw [thick ] (c1)  .. controls +(.0,-.1) and +(0,.5).. (v4);
	\draw [thick ] (c1)  .. controls +(.2,-.1) and +(0,.5).. (v3);
	
	\draw [-|,thick] (v1) -- +(270:.4);
	\draw [-|,thick] (v2) -- +(270:.4); 
	\draw [-|,thick] (v3) -- +(270:.4);
	\draw [-|,thick] (v4) -- +(270:.4);
	
	\node at (1.5, -1.5) {$\underbrace{\qquad  \qquad  \quad }_{n}$};
	
	\node at (3,0) [anchor=west,text width = 1cm] {$+\quad \ldots  $};

	\node at (4.5,-.15) [anchor=west,text width = 1cm] {$+ \quad \sum \limits_{P(3)} $};
	
	\node [intVertex] (c1) at (7.3,.5) {};
	\node [diffTree] (v1) at (6.6, -.2){};
	\node [diffTree] (v2) at (7.4, -.2){};
	\node [diffTree] (v3) at (8.2, -.2){};

	\draw [thick ] (c1) .. controls +(-.3,-.1) and +(0,.5).. (v1);
	\draw [thick ] (c1)  .. controls +(.05, -.1) and +(0,.3)..(v2);
	\draw [thick ] (c1)  .. controls +(.4,-.1) and +(0,.5)..(v3);
	
	\draw [-|, thick, bend angle=20, bend left] (v1.230) to +(220:.5);
	\draw [-|, thick, bend angle=10, bend left] (v1.250) to +(240:.5); 
	\draw [-|, thick, bend angle=10, bend right] (v1.280) to +(280:.5); 
	\draw [-|, thick, bend angle=20, bend right] (v1.300) to +(310:.5); 
	
	\node at (6.5, -1.3) {$\underbrace{\qquad    }_{\abs{P_1}}$};
	
	\draw [-|, thick, bend angle=10, bend left] (v2.260) to +(250:.5);
	\draw [-|, thick, bend angle=10, bend right] (v2.280) to +(290:.5); 
	
	\node at (7.5, -1.3) {$\underbrace{\quad }_{\abs{P_2}}$};

	\draw [-|, thick, bend angle=10, bend left] (v3.260) to +(250:.5);
	\draw [-|, thick, bend angle=10, bend right] (v3.280) to +(290:.5); 
	\draw [-|, thick, bend angle=20, bend right] (v3.310) to +(320:.5); 
	
	\node at (8.4, -1.3) {$\underbrace{\qquad  }_{\abs{P_3}}$};

	\end{tikzpicture}
	\caption{Structure of the contributions to $S^{(3)}_n$: Since the external edges are onshell, all terms consist of tree sums $b_k$ and one  interaction vertex $-iw^{(3)}_k$. The sum $P(k)$ runs over all possible ways of distributing the $n$ external edges to the given number $k$ of tree sums $b_{\abs{k_i}}$. }
	\label{Sn_image}
\end{figure}

The first term only contains tree sums $b_1\equiv 1$, it is
\begin{align*}
S^{(s)}_{n,n}&=-i w^{(s)}_n \underbrace{b_1 b_1 \cdots b_1}_{n \text{ factors}}.
\end{align*}
The second contribution to $S^{(s)}_n$ contains one $b_2$ and the rest $b_1$, it is $S^{(s)}_{n,(n-1), \text{single}} = -iw^{(s)}_{n-1} b_2  \cdot b_1^{n-2}$.
The third term is made from a vertex $-iw^{(s)}_{n-2}$ and either one $b_3$ or two  $b_2$ and the rest $b_1$.
The last contribution to $S^{(s)}_n$, shown in the right in \cref{Sn_image}, has one vertex $-iw^{(s)}_s$ and $s$ tree sums $b_{k_1}, b_{k_2}, b_{k_s}$ such that $k_j \geq 1, k_1 + k_2 +\ldots +  k_s =n$. 

For any fixed valence $k$ of the interaction vertex $-i w^{(s)}_k$, there is a sum over the set $P(k)$ of all possible ways of assigning the $n$ external edges to the individual tree sums. This is the same partition as in the definition of Bell polynomials \cref{bell_partitions}, hence for any fixed $k$
\begin{align*}
S^{(s)}_{n,k} &=-i w^{(s)}_k\sum_{P(k)} S^{(s)}_{n,k,\text{single}} = -i w^{(s)}_k\sum_{P(k)} \prod_{i=1}^k b_{k_i} = B_{n,k} \left( b_1, b_2, \ldots  \right) .
\end{align*}
Finally, $S^{(s)}_n$ contains all of these terms for $k\in \{s, \ldots, n\}$. Inserting the Feynman rule \cref{diff_arb_wns} for $-iw^{(s)}_n$ yields
\begin{align}\label{diff_phis_Sn}
S^{(s)}_n &= -i \lambda_s \sum_{k=s}^n   B_{k,s}(1, 2! a_1, 3! a_2, 4! a_3, \ldots )  B_{n,k} \left( b_1, b_2, \ldots  \right).
\end{align}

\begin{theorem}\label{thm_phis_Sn}
	  $S^{(s)}_s= -i\lambda_s$ and  $S^{(s)}_n= 0$ for any $n\neq s$. 
\end{theorem}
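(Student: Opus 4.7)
The plan is to recognise the double sum in \cref{diff_phis_Sn} as the coefficient of $t^n$ in a composition of two power series, one being the diffeomorphism $\phi(\rho)$ and the other its inverse $\rho(\phi)$. The main ingredients are the generating-function identity \cref{bell_generating} for Bell polynomials, Fa\`a di Bruno's formula \cref{faadibruno}, and the identification of the tree sums $b_n$ with coefficients of the inverse diffeomorphism via \cref{lem_inverse} (equivalently \cref{lem_inverse2}).

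First I would rewrite each Bell polynomial appearing in \cref{diff_phis_Sn} as a coefficient of a power series. Specialising \cref{bell_generating} with $x_j = j!\, a_{j-1}$ gives
\begin{align*}
B_{k,s}(1,\; 2!\,a_1,\; 3!\,a_2,\ldots) &= \frac{k!}{s!}\,[\rho^k]\,\phi(\rho)^s,
\end{align*}
where $\phi(\rho) = \sum_{j\geq 0} a_j \rho^{j+1}$ is the diffeomorphism itself. Similarly, with $x_j = b_j$ and using \cref{lem_inverse} to identify $\rho(t) = \sum_{n\geq 1} \frac{b_n}{n!} t^n$,
\begin{align*}
B_{n,k}(b_1, b_2, \ldots) &= \frac{n!}{k!}\,[t^n]\,\rho(t)^k.
\end{align*}

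Next I would substitute these two identities into \cref{diff_phis_Sn}. The factorials $k!$ cancel, and one is left with
\begin{align*}
S^{(s)}_n &= -i\lambda_s\, \frac{n!}{s!}\, \sum_{k=s}^{n} [\rho^k]\,\phi(\rho)^s \; \cdot \; [t^n]\,\rho(t)^k.
\end{align*}
The inner sum is precisely the statement of Fa\`a di Bruno's formula \cref{faadibruno} for extracting the $n$-th coefficient of the composition: it equals $[t^n]\,\phi(\rho(t))^s$, i.e.\ the coefficient of $t^n$ in the power series obtained by plugging $\rho(t)$ into $\phi(\rho)^s$. Since the diffeomorphism is tangent to identity, both series have vanishing constant term, so composition is well defined on the level of formal power series.

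Finally I would use the fact that $\rho(\phi)$ is by construction the compositional inverse of $\phi(\rho)$, so $\phi(\rho(t)) = t$ and therefore $\phi(\rho(t))^s = t^s$. Hence $[t^n]\,\phi(\rho(t))^s = \delta_{n,s}$, which yields
\begin{align*}
S^{(s)}_n &= -i\lambda_s\, \frac{n!}{s!}\, \delta_{n,s} = -i\lambda_s\, \delta_{n,s},
\end{align*}
matching the cases $S^{(s)}_s = -i\lambda_s$ and $S^{(s)}_n = 0$ for $n > s$ claimed in the theorem (the cases $n < s$ are empty because the sum in \cref{diff_phis_Sn} starts at $k = s$). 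The main obstacle is purely bookkeeping: one has to keep the two different conventions $B_{n,s}(1,2!a_1,\ldots)$ (arguments $j!\,a_{j-1}$) and $B_{n,k}(b_1,b_2,\ldots)$ (arguments $b_j$) straight, and verify that the factorials line up so that the resulting sum is exactly Fa\`a di Bruno and not off by a combinatorial prefactor. Once that matching is carried out, the identity $\phi\circ\rho = \mathrm{id}$ does all the remaining work.
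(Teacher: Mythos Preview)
Your proof is correct and follows essentially the same route as the paper's: both recognise the double Bell-polynomial sum in \cref{diff_phis_Sn} as the $n$-th coefficient of the composition $\phi(\rho(\cdot))^s$ via Fa\`a di Bruno and \cref{lem_inverse}, and then use $\phi\circ\rho=\mathrm{id}$ to collapse it to $-i\lambda_s\,\delta_{n,s}$. Your bookkeeping of the factorials $k!/s!$ and $n!/k!$ is in fact slightly more explicit than the paper's condensed version.
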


\begin{proof}\footnote{The interpretation of $S_n^{(s)}$ as a series coefficient was originally suggested by Ali Assem Mahmoud. This simplified the proof considerably.} For $k<s$, the Bell polynomial $B_{k,s}(\ldots)$ is zero, hence the sum can be started at $k=1$. 
  $B_{k,s} (1, 2! a_1, 3! a_2, \ldots) = k! [\rho ^k] \phi^s(\rho)$ is by \cref{faadibruno} the $k^{\text{th}}$ coefficient of the power series $\phi^s(\rho)$. Using \cref{lem_inverse}, $\rho(\phi) = \sum_{j=1}^\infty \frac{b_j}{j!}\phi^j$,  the tree sum $S_n^{(s)}$ by \cref{faadibruno} can be interpreted as the coefficient of $\phi^n$  of a power series 
  \begin{align*}
  S^{(s)}_n &=-i \lambda_s \sum_{k=s}^n   k! [\rho^k] \phi^s(\rho) \cdot   B_{n,k} \left( b_1, b_2, \ldots  \right)\\
  &= -i\lambda_s[\phi^n] \left(  \big(\phi (\rho(\phi)) \big)^s \right)  =-i\lambda_s [\phi^n]\phi^s = -i\lambda_s \delta_{ns}.
  \end{align*}
  Hence the tree sum is $-i\lambda_s$ for $n=s$ and zero otherwise. 
\end{proof}

If, according to \cref{L_arb}, several different interaction terms are present in the original theory, then each of them comes with a different coupling constant $\lambda_s$ and corresponding tree sums $S^{(s)}_n$. Each tree which contains at least two different interaction vertices $w_j^{(s_1)}$ and $w_k^{(s_2)}$ can be decomposed into a product of trees containing only a single one, connected via an intermediate propagator, see \cref{subs_cuts}. This is always possible since an interaction vertex cannot cancel an adjacent propagator, so there is always (at least) one uncancelled propagator between any two interaction vertices. 
The total tree sum linear in the coupling constants consequently is a sum of the individual contributions
\begin{align}\label{diff_arb_Sn}
S_n &:= \sum_{s=3}^\infty S_n^{(s)} = -i \sum_{s=3}^n \lambda_s \sum_{k=s}^n  B_{n,k} \left( b_1, b_2, \ldots  \right)  B_{k,s}(1, 2! a_1, 3! a_2, 4! a_3, \ldots ).
\end{align}
From \cref{thm_phis_Sn} one has
\begin{align}\label{Sn_lambda} 
S_n &= -i \lambda_n \qquad \forall n \geq 3.
\end{align}

\subsection{Tree sums including interaction}\label{subs_sum_int}

In this subsection, we restrict ourselves to only  one arbitrary but fixed power $s$ of interaction monomials in the underlying Lagrangian density \cref{L_arb}. Like in \cref{subs_cancellation_higher}, the case of multiple interaction monomials is a straightforward generalization.

For the diffeomorphism of $\phi^s$-theory \cref{L_phis} there are additional vertices $-iw^{(s)}_n$ from \cref{diff_arb_wns} which contribute to tree sums.  Therefore the tree sums $b_n$ from \cref{def_bn}  need to be generalized to comprise both vertex types:
\begin{definition} \label{def_bsn}
	$b'_n$ is the sum of all trees with $n$ external onshell edges without propagator and additionally  one external offshell edge with corresponding propagator. $b'_n$ consists of   both pure diffeomorphism vertices $iv_j$ and interaction-diffeomorphism vertices $-iw^{(s)}_j$.
\end{definition}

The tree sum $b'_n$ includes the trees  where all vertices are of diffeomorphism type $iv_k$, therefore $b'_n = b_n + \mc O \left( \lambda_s \right)  $. The cancellation of higher interaction vertices \cref{thm_phis_Sn} implies that only one additional vertex type, $-iw_s^{(s)} = -i\lambda_s$, needs to be included into the trees to build $b'_n$. Namely, let there be any vertex $-i w_j^{(s)}$, $j>s$, then one of two cases can occur: 
\begin{enumerate}
	\item All (internal or external) edges connected to $-iw^{(s)}_j$ are uncancelled. Then, in the sum over all possible trees, there will be precisely all trees at the position of $-iw^{(s)}_j$ to make up $S_j$,compare \cref{Sn_image}, and this sum vanishes due to \cref{thm_phis_Sn}. Hence all vertices $-iw^{(s)}_{j>s}$ where no adjacent edge is cancelled can be left out from the beginning. 
	\item At least one adjacent edge to $-iw^{(s)}_j$ is cancelled. This cancellation always originates from an adjacent diffeomorphism vertex, not from $-iw^{(s)}_j$ itself. But then, $-iw^{(s)}_j$ and the adjacent vertex  form a contribution to $S^{(s)}_m$ for some $m>j>s$. Since $S^{(s)}_m=0$, even these contributions can be left out.
\end{enumerate}
All that remains  are vertices $-iw^{(s)}_s=-i\lambda_s$ - which are the original vertices of $\phi^s$-theory - where none of the $s$ adjacent edges are cancelled.

By \cref{def_bsn}, there is only one offshell external edge in $b'_n$, the top one. But by \cref{lem_Sn_free}, any tree of pure diffeomorphism type vanishes if all its external edges are onshell. Hence any contribution of pure diffeomorphism type must be connected to the single offshell edge of $b'_n$. Consequently, all vertices of pure diffeomorphism type $iv_j$ are connected inside $b'_n$. On the other hand, the vertices $-i\lambda_s$ do not need to be connected to each other, they can be attached anywhere to the pure diffeomorphism part. This determines the structure of $b'_n$ up to   a summation over all permutations of external edges: 
\begin{lemma}\label{lem_bsn}
	The $b'_n$ are of the following form: 
	\begin{align*} 
	b'_k &= b_k \qquad \forall\; k<s-1\\
	b'_{s-1} &= b_{s-1} + \frac{\lambda_s}{x_{1+\ldots + (s-1)}}.
	\end{align*}
	If $n \geq s-1$ then there is a tree sum $b'_{s-1}$ where the single offshell edge is the offshell edge of $b'_n$ and suitably many interaction vertices $-i\lambda_s$ are attached via uncancelled internal propagators to the lower edges of $b'_{s-1}$ to make up $n$ external edges. Further, a sum over all possible $b_k$ with $k\neq s-1$ where again the appropriate number of interaction vertices are attached to the lower edges.  
\end{lemma}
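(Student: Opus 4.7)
The plan is to combine the cancellation results of \cref{thm_phis_Sn} with the topological structure of trees built from $iv_j$ and $-iw^{(s)}_j$ vertices. As established in \cref{subs_cancellation_higher} and reiterated in the paragraph before the lemma, any occurrence of a higher interaction vertex $-iw^{(s)}_{j>s}$ either assembles with surrounding pure-diffeomorphism vertices into a complete $S^{(s)}_m$ tree sum (which vanishes) or its adjacent edges are all uncancelled (in which case it contributes to an $S^{(s)}_j$ that again vanishes). Thus I first reduce to trees in which the only interaction vertex type is $-i\lambda_s=-iw^{(s)}_s$, and no adjacent edge of any such vertex is cancelled. This immediately means each $-i\lambda_s$ vertex connects to the rest of the tree by an uncancelled internal propagator and contributes $s-1$ leaves that become onshell external edges of $b'_n$.

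Next I would show that the pure-diffeomorphism part of any non-vanishing contribution to $b'_n$ forms a single connected subtree $B$ containing the offshell external edge. Indeed, suppose some cluster $C$ of pure-diffeomorphism vertices is disjoint from the component of $B$ containing the offshell edge. Then $C$ is attached to the remainder only through edges incident to $-i\lambda_s$ vertices, and by the previous paragraph these edges are uncancelled. From the perspective of $C$ these attachment edges are therefore onshell external edges of a pure diffeomorphism tree, so $C$ evaluates to an $A^0_{|C|_{\text{ext}}}$, which vanishes by \cref{lem_Sn_free}. Hence $B$ is connected and the offshell edge sits on it; by \cref{def_bn} the amplitude of $B$ together with its offshell propagator is some $b_k$, where $k$ is the number of lower edges of $B$.

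Now I fit the three cases. If $n<s-1$, any additional $-i\lambda_s$ vertex would contribute at least $s-1\geq n+1$ external edges, so no interaction vertex fits and $b'_n=b_n$. If $n=s-1$, either no interaction vertex is present (giving $b_{s-1}$) or exactly one $-i\lambda_s$ is attached directly to the offshell edge via the propagator $i/x_{1+\ldots+(s-1)}$, contributing $(i/x_{1+\ldots+(s-1)})\cdot(-i\lambda_s)\cdot b_1^{s-1}=\lambda_s/x_{1+\ldots+(s-1)}$, yielding the claimed formula for $b'_{s-1}$. For $n>s-1$, I split according to whether $B$ has $k=s-1$ lower edges (in which case the interaction vertex attached directly at the top is absorbed into $b'_{s-1}$, and further $-i\lambda_s$ vertices hang off the remaining lower edges of $b'_{s-1}$) or $k\neq s-1$ lower edges (where $B=b_k$ and all $-i\lambda_s$ vertices hang off its lower edges). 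In every case, summing over permutations of external edges attached to the branches reproduces the structure in the lemma.

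The main obstacle is the second step, namely ruling out disconnected pure-diffeomorphism clusters, because one must convert the local vanishing statement \cref{lem_Sn_free} into a statement about subgraphs embedded inside $b'_n$. The key point is that the cancellation mechanism of \cref{subs_cuts} makes an uncancelled internal edge combinatorially equivalent to an onshell external edge of the subgraph it bounds, so the argument reduces cleanly to an application of \cref{lem_Sn_free}; the remaining bookkeeping of permutations and of the split between $b'_{s-1}$ and $b_{k\neq s-1}$ is then routine.
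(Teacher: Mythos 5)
Your proposal is correct and follows essentially the same route as the paper: the paper's own justification (given in the discussion preceding the lemma) likewise first uses \cref{thm_phis_Sn} to discard all interaction vertices except uncancelled $-i\lambda_s$, then invokes \cref{lem_Sn_free} to force the pure-diffeomorphism vertices into a single connected subtree containing the unique offshell edge, and finally reads off the decomposition into $b'_{s-1}$ and the $b_{k\neq s-1}$ with interaction vertices hanging off uncancelled propagators. Your explicit three-way case split on $n$ versus $s-1$ is somewhat more detailed than the paper's presentation, but it is the same argument.
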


\begin{example} \label{ex_bsn}
	Consider   a diffeomorphism of $\phi^3$-theory, i.e. $s=3$. An explicit calculation yields
	\begin{align*} 
	b'_2(1,2) &= b_2 +(-i\lambda_3)\frac{i }{x_{1+2}} = -2a_1 + \frac{\lambda_3}{x_{1+2}} =b'_2(1+2)\\
	b'_3 (1,2,3)&=b_3 +\left( -2 a_1 + \frac{\lambda_3}{x_{1+2+3}} \right)     (-i\lambda_3) \left( \frac{i}{x_{2+3}} + \frac{i}{x_{1+3}} + \frac{i}{x_{1+2}} \right)  \\
	&=b_3 +b'_2(1+2+3) \cdot (-i\lambda_3) \left( \frac{i}{x_{2+3}} + \frac{i}{x_{1+3}} + \frac{i}{x_{1+2}} \right)  
	\end{align*}
	\begin{align*}
	b'_4(1,2,3,4) &=  b_4  + b_3  \cdot (-i\lambda_3)\left( \frac{i}{x_{1+2}} + \text{ 5 more} \right) \\
	&\qquad +b'_2 (1+2+3+4)\cdot (-i\lambda_3)^2  \left(\frac{i^2}{x_{1+2}x_{3+4}} + \text{ 2 more} \right)\\
	&\qquad + b'_2(1+2+3+4) \cdot (-i\lambda_3)^2 \left( \frac{i^2}{x_{1+2}x_{1+2+3}}+ \text{ 5 more} \right) .
	\end{align*}
  The construction of $b'_n$ according to \cref{lem_bsn} is shown in \cref{fig_bsn}.	
  
 	They can be interpreted in terms of glued tree sums. The amplitude of $b'_3$ contains (besides $b_3$) a sum of products $b'_2 \cdot (-i\lambda_3)\cdot \frac{i}{x_e}$. These represent graphs where an interaction vertex $-i\lambda_3$ is attached to a tree sum $b'_2$ as shown in \cref{fig_bsn}. The Feynman amplitude is a product, hence this graph represents a cut, the same amplitude would arise from a disjoint union of $b'_2$ and $-i\lambda_3$ and the propagator. Note especially that the internal offshell variable $x_e$ is not zero, nevertheless, $b'_2$ is the same object as if it were computed with both lower edges offshell, just $x_{1+2}$ is replaced by $x_{1+2+3}$. This is again the phenomenon discussed in \cref{subs_cuts}: In terms of Feynman rules of the adjacent vertices, an uncancelled edge is an onshell edge. 
\end{example}

\begin{figure}[h]
	\begin{tikzpicture} 
	
	\node [] at (0,0) {$b'_2=$};
	\node[intTree] (x) at (1, 0){};
	\draw[thick] (x) --   +(90:.5);
	\draw [ thick, -|,bend angle=20,bend left] (x.245) to +(240:.4);
	\draw [thick, -|, bend angle=20, bend right] (x.295) to +(300:.4); 
	
	\node at (2,0) {$=$};
	\node[diffTree] (x) at (3,0)  {};
	\draw[thick] (x) --   +(90:.5);
	\draw [ thick, -|,bend angle=20,bend left] (x.245) to +(240:.4);
	\draw [thick, -|, bend angle=20, bend right] (x.295) to +(300:.4);

	\node[anchor=west] at (3.7,0) {$+$};
	
	\node[intVertex] (b) at (5,0){};
	\draw[thick] (b) --   +(90:.5);
	\draw [thick, -|] (b) -- +(300: .5);
	\draw [thick, -|] (b) -- +(240: .5);

	\node [] at (0,-2) {$b'_3=$};
	\node[intTree] (x) at (1, -2){};
	\draw[thick] (x) --   +(90:.5);
	\draw [-|, thick, bend angle=20, bend left] (x.240) to +(230:.5);
	\draw [-|, thick] (x.270) -- +(270:.5); 
	\draw [-|, thick, bend angle=20, bend right] (x.300) to +(310:.5); 
	
	\node at (2,-2) {$=$};
	\node[diffTree] (x) at (3,-2)  {};
	\draw[thick] (x) --   +(90:.5);
	\draw [-|, thick, bend angle=20, bend left] (x.240) to +(230:.5);
	\draw [-|, thick] (x.270) -- +(270:.5); 
	\draw [-|, thick, bend angle=20, bend right] (x.300) to +(310:.5); 
	
	\node[anchor=west] at (3.7,-2) {$+ \sum \limits_{\text{3 Perm.}}$};
	
	\node[intTree] (a) at (5.9,-1.7) {};
	\node[intVertex] (b) at (5.6,-2.3){};
	\draw[thick] (a) --   +(90:.5);
	\draw[thick, -|-, bend angle =10, bend left] (a.240) to (b);
	
	\draw [-|, thick, bend angle=10, bend right] (a.300) to +(300:.8); 
	\draw [thick, -|] (b) -- +(300: .5);
	\draw [thick, -|] (b) -- +(240: .5);

	\node [] at (0,-4) {$b'_4=$};
	\node[intTree] (x) at (1, -4){};
	\draw[thick] (x) --   +(90:.5);
	\draw [-|, thick, bend angle=20, bend left] (x.240) to +(220:.5);
	\draw [-|, thick, bend angle=10, bend left] (x.260) to +(250:.5);
	\draw [-|, thick, bend angle=10, bend right] (x.280) to +(290:.5);
	\draw [-|, thick, bend angle=20, bend right] (x.300) to +(320:.5); 
	
	\node at (2,-4) {$=$};
	\node[diffTree] (x) at (3,-4)  {};
	\draw[thick] (x) --   +(90:.5);
	\draw [-|, thick, bend angle=20, bend left] (x.240) to +(220:.5);
	\draw [-|, thick, bend angle=10, bend left] (x.260) to +(250:.5);
	\draw [-|, thick, bend angle=10, bend right] (x.280) to +(290:.5);
	\draw [-|, thick, bend angle=20, bend right] (x.300) to +(320:.5); 
	
	\node[anchor=west] at (3.5,-4) {$+ \sum \limits_{\text{6 Perm.}}$};
	
	\node[diffTree] (a) at (5.8,-3.7) {};
	\node[intVertex] (b) at (5.5,-4.3){};
	\draw[thick] (a) --   +(90:.5);
	\draw[thick, -|-, bend angle =10, bend left] (a.240) to (b);
	\draw[thick, -|, bend angle =20, bend right ] (a.300) to +(310: 1);
	\draw[thick, -|, bend angle =10, bend right ] (a.270) to +(290: .9);
	\draw [thick, -|] (b) -- +(300: .5);
	\draw [thick, -|] (b) -- +(230: .5);
	
	\node[anchor=west] at (6.4,-4) {$+ \sum \limits_{\text{3 Perm.}}$};
	
	\node[intTree] (a) at (8.6,-3.7) {};
	\node[intVertex] (b) at (8.2,-4.3){};
	\node[intVertex](c) at (9.0, -4.3){};
	\draw[thick] (a) --   +(90:.5);
	\draw[thick, -|-, bend angle =10, bend left] (a.240) to (b);
	\draw[thick, -|-, bend angle =10, bend right ] (a.300) to (c);
	\draw [thick, -|] (b) -- +(300: .5);
	\draw [thick, -|] (b) -- +(230: .5);
	\draw [thick, -|] (c) -- +(310: .5);
	\draw [thick, -|] (c) -- +(240: .5);
	
	\node[anchor=west] at (9.2,-4) {$+ \sum \limits_{\text{6 Perm.}}$};
	
	\node[intTree] (a) at (11.7,-3.4) {};
	\node[intVertex] (b) at (11.3,-4){};
	\node[intVertex](c) at (11.7, -4.3){};
	\draw[thick] (a) --   +(90:.5);
	\draw[thick, -|-, bend angle =10, bend left] (a.240) to (b);
	\draw[thick, -|-] (b.320) to (c);
	\draw [thick, -|, bend angle =10, bend right] (a) to +(300: 1.2);
	\draw [thick, -|] (b) -- +(240: .6);
	\draw [thick, -|] (c) -- +(310: .5);
	\draw [thick, -|] (c) -- +(240: .5);

	\end{tikzpicture}
	
	\caption{Construction of $b'_n$ for $s=3$ from \cref{ex_bsn} as  sum of $b_k$ where $k<n$ and $b'_2$ according to \cref{lem_bsn}. $\sum_\text{ n Perm.}$ indicates the sum over $n$ permutations of external edges like in \cref{free_b1_b2_b3_picture}. Crossed internal edges are uncancelled.}
	\label{fig_bsn}
\end{figure}
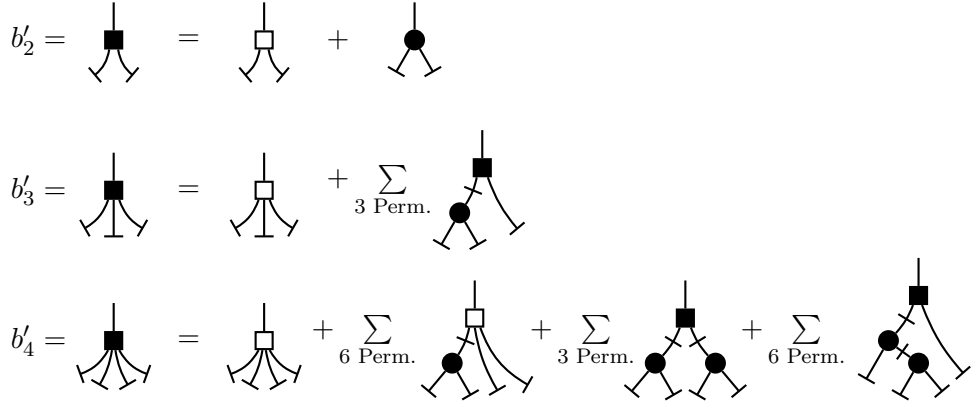

\Cref{lem_bsn}  implies that, unlike $b_n$ from \cref{thm_bn}, the $b'_n$ do depend on external momenta and masses via the remaining uncancelled internal propagators. Consequently, setting all external edges onshell does not render them zero like in \cref{lem_Sn_free}, but produces the sum of all trees which can be build from $-i\lambda_s$ vertices.

\begin{lemma}
	Let ${A'}^0_n$ be the sum of all trees of the diffeomorphism of an interacting theory with $n$ external edges, all of which are onshell. Then ${A'}^0_n$ coincides with the corresponding sum of the underlying interacting theory. That is, the diffeomorphism does not alter the onshell tree level Feynman amplitudes i.e. the $S$-matrix.
\end{lemma}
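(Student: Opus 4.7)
The plan is to use the cut decomposition from Section~\ref{subs_cuts} together with the two vanishing results \cref{lem_Sn_free} and \cref{thm_phis_Sn} to reduce any tree in the diffeomorphism theory to a tree built purely from bare $-i\lambda_s$ vertices.

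First I would fix an arbitrary tree $T$ contributing to ${A'}^0_n$ and classify its vertices into \emph{diffeomorphism vertices} $iv_k$ (see \cref{diff_vn}) and \emph{interaction-diffeomorphism vertices} $-iw^{(s)}_k$ (see \cref{diff_arb_wns}). The central structural observation, already used in \cref{subs_sum_int}, is that an interaction vertex $-iw^{(s)}_k$ contains no offshell-variable factors and so cannot cancel any adjacent propagator; only diffeomorphism vertices $iv_k$ can cancel adjacent edges. In particular, any internal edge joining two interaction vertices is necessarily uncancelled.

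Next I would cut $T$ along all its uncancelled internal edges, following the mechanism of \cref{subs_cuts}: each uncancelled edge $e$ factorizes the amplitude, and the two sides behave, for the purposes of their remaining Feynman rules, as if $e$ were an onshell external edge. Since all original external edges of $T$ are already onshell, every resulting connected component is a tree whose entire external boundary is onshell. Each component is of exactly one of two types: either it contains no interaction vertex at all, or it contains precisely one interaction vertex $-iw^{(s)}_j$ together with some number of diffeomorphism vertices (two interaction vertices in the same component would contradict the uncancelledness of the edges between them).

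Summing over all trees contributing to ${A'}^0_n$ for a fixed cut pattern, each purely diffeomorphism component is annihilated by \cref{lem_Sn_free}, and each component containing a single interaction vertex of valence $k$ with all external edges onshell sums to $S^{(s)}_k$, which by \cref{thm_phis_Sn} vanishes unless $k=s$ and equals $-i\lambda_s$ when $k=s$. Therefore the only surviving configurations are those in which every interaction component has collapsed to a bare $-i\lambda_s$ vertex (i.e.\ the full valence is exhausted by the external/cut edges, with no surrounding diffeomorphism decoration) and in which no purely diffeomorphism components are present. What remains is exactly the set of trees built from $s$-valent vertices $-i\lambda_s$ connected by ordinary propagators, with $n$ onshell external edges — which is precisely the tree-level onshell amplitude of the underlying $\phi^s$-theory. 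The extension to the multi-coupling Lagrangian \cref{L_arb} is immediate by the same cut argument applied per coupling.

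The main obstacle, and the part to handle carefully, is the bookkeeping of the cuts: one must verify that the reorganisation of the sum-over-trees into sums-over-components-for-each-cut-pattern reproduces each original tree exactly once with the correct symmetry factor, and that the factorisation $A^{j_1}_{n_1}|_{x_e=0}\cdot\tfrac{i}{x_e}\cdot A^{j_2}_{n_2}|_{x_e=0}$ remains valid when some components are of mixed diffeomorphism/interaction type. This is essentially the content of \cite[Prop.~4.1]{KY17} generalised to include one $-iw^{(s)}_k$ per component, and once this bookkeeping is in place, the vanishing lemmas do all the work.
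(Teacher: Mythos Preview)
Your argument is correct, but the route differs from the one the paper takes for this particular lemma. The paper's proof works through the structural description of $b'_{n-1}$ established in \cref{lem_bsn}: it writes ${A'}^0_n$ as $-ix_n b'_{n-1}|_{x_n=0}$ (symmetrised), then uses the explicit form of $b'_{n-1}$ as a pure $b_k$ piece plus terms involving $b'_{s-1}$ attached to $-i\lambda_s$ vertices. Amputation and the onshell limit kill every $b_k$ contribution for $k\geq s-1$ and collapse $b'_{s-1}$ to $-i\lambda_s$, leaving exactly the underlying $\phi^s$-trees.

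Your approach bypasses \cref{lem_bsn} entirely and instead cuts at uncancelled edges, then invokes \cref{lem_Sn_free} and \cref{thm_phis_Sn} directly on the components. This is essentially the same mechanism the paper deploys later for the loop-level statement \cref{thm_Alm}, specialised to $l=0$. What your version buys is uniformity: the tree and loop cases are handled by one argument. What the paper's version buys is that the bookkeeping you flag as ``the main obstacle'' is already packaged inside \cref{lem_bsn}, so no appeal to \cite[Prop.~4.1]{KY17} is needed at this point. Your claim that each cut component carries at most one interaction vertex is justified by the Euler-characteristic count (a tree component with $|V|$ vertices has $|V|-1$ internal edges, each needing a diffeomorphism vertex to cancel it, and external edges are onshell so cannot absorb a cancellation), which you might state explicitly rather than leave implicit.
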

\begin{proof}
	To produce ${A'}^0_n$ whe have to use $b'_{n-1}$ but amputate the remaining offshell edge and set it onshell. Define $x_n := x_{1+\ldots + (n-1)}$ then ${A'}^0_n$ is the symmetric sum of all $n$ ways to assign the offshell edge, i.e. 
	\begin{align*} 
	{A'}^0_n &= -i x_{n} b'_{n-1} (1, 2, \ldots, n-1)\big|_{x_n=0} + \text{ (n-1)  terms where $x_n$ is exchanged for an other $x_j$}.
	\end{align*}
	 If $n-1<s$ this renders ${A'}^0_n$  zero by \cref{lem_Sn_free} since in that case $b'_{n-1} = b_{n-1}$ by \cref{lem_bsn}. 
	 
	 We concentrate on the first summand of ${A'}^0_n$ in the following, all other ones differ by just a labeling.
	 Assume $n-1= s$. 	By amputating the onshell edge any summand in $b'_{n-1}$ obtains a prefactor $-ix_{n}$. This prefactor only cancels in the summand $ \frac{\lambda_s}{x_{n}}b_{s-1}$ in $b'_{s-1}$. Hence, when $x_{n}=0$ then only $-i\lambda_s$ remains of $b'_{s-1}$ and all other contributions vanish. 
	
	Finally assume $n-1 >s$. Then again $b'_{s-1}$ collapses to $(-i\lambda_s)$. All contributions to $b'_{n-1}$ are by \cref{lem_bsn} either proportional to $b'_{s-1}$ or to some $b_k$ where $k\geq s$. All the latter vanish if the remaining edge is set onshell. What remains are the trees which are built from $b'_{n-1}=(-i\lambda_s)$ and vertices $-i\lambda_s$. This is the tree sum of the underlying $\phi^s$ theory.
\end{proof}

\subsection{Offshell tree Amplitudes}\label{sec_offshelltree}
From the tree sums $b'_n$ with one external edge offshell, one can construct tree sums with arbitrary many external edges offshell following the mechanism discussed in \cref{subs_offshell}. The  effect is similar to the inclusion of interaction vertices $-iw_n$ which do not cancel adjacent edges in \cref{subs_sum_int}:  Allowing for more external edges to be cancelled amounts to leaving more internal edges uncancelled and gives rise to tree sums which can be cut along an internal uncancelled edge according to \cref{subs_cuts}. Therefore, the structure of these tree sums is similar to \cref{lem_bsn}. 

\begin{definition}\label{def_Asjn}
	${A'}^j_n$ is defined as the sum of all trees with $n$ external edges, $j$ of which are offshell,  in the diffeomorphism of the interacting theory \cref{L_phis}. This is the generalization of \cref{def_Ajn}.
\end{definition}
Clearly, ${A'}^j_n$  contains the trees of pure diffeomorphism as a subset, so ${A'}^j_n= A^j_n + \mc O (\lambda_s)$. The interaction vertices $-i \lambda_s$ can be included in several equivalent ways. Conceptually, it is most transparent if only the tree sums $b_{s-1}$ are replaced by $b'_{s-1}$ according to \cref{lem_bsn}. This will precisely include all trees with interaction vertices but not more. Recall that by \cref{thm_phis_Sn}, it is sufficient to include the $s$-valent interaction vertices. Equivalently, one could also replace $b_{n-1}$ by $b'_{n-1}$ and leave all other $b_{k<n-1}$ intact. By \cref{lem_bsn}, this again includes all possible interaction trees. But it is not possible to do both at the same time, otherwise, the interaction trees are overcounted.

\begin{example}\label{ex_As4}

	The equivalent amplitude of \cref{ex_A4} for a diffeomorphism of $\phi^3$-theory reads
	\begin{align*}
	{A'}^4_4	&= -i\lambda_3 \left(  -i\lambda_3 -i b_2 \left( x_1+x_2+x_3+x_4 \right)  \right) \left( \frac{i}{x_{1+2}} + \frac{i}{x_{1+3}} + \frac{i}{x_{1+4}} \right) \nonumber \\
	&\qquad +A_4.
	\end{align*}
	In this case, setting $x_1=x_2=x_3=0$ and $x_4 = x_{1+2+3}\neq 0$ and including the external propagator $\frac{i}{x_{1+2+3}}$ produces
	\begin{align*} 
	& \frac{i}{x_{1+2+3}}(-i\lambda_3) \left( -i \lambda_3 -ib_2 x_{1+2+3}  \right) \left( \frac{i}{x_{1+2}} + \frac{i}{x_{1+3}} + \frac{i}{x_{1+4}} \right) \nonumber  +b_3 \\
	&\quad = (-i \lambda_3) b'_2 (1+2+3)\left( \frac{i}{x_{1+2}} + \frac{i}{x_{1+3}} + \frac{i}{x_{1+4}} \right) \nonumber  +b_3 .
	\end{align*}
	This is $b'_3(1,2,3)$ according to \cref{ex_bsn} as expected: The tree sum with one external edge offshell is obtained by taking the general tree sum and setting all but one external edge onshell. Taking the $b_2^2$ term out of $A^4_4$ and merging it with the first contribution in ${A'}^4_4$ allows to write 
	 \begin{align*} 
	 {A'}^4_4 &= \Big(-i b'_2(1+2)\Big) \Big(-ib'_2(3+4)\Big)  \frac{i}{x_{1+2}}+ \text{ 2 more permutations  }\\
	 &\qquad -i b_3\cdot (x_1+x_2+x_3+x_4). 
	 \end{align*}
	This is shown in \cref{fig_offshell}. Note especially that all contributions from interaction vertices stem from $b'_2$. It is also possible to have all interaction in the hightest possible tree sum, in this case ${A'}^4_4 \sim b'_3 + b_2\cdot b_2$. But not both $b_3$ and $b_2$ can be replaced by their interaction version simultaneously. 
	
	For 5 external edges one has
	\begin{align*} 
	{A'}^5_5 \sim b_4 + b_3 \cdot b'_2 + b'_2 \cdot b'_2 \cdot  b'_2.
	\end{align*}
	All interaction vertices could equivalently  be collected into $b'_4$, then the same amplitude reads
	\begin{align*} 
	{A'}^5_5 &\sim b'_4 + b_3 \cdot b_2 + b_2 \cdot b_2 \cdot b_2.
	\end{align*}
\end{example}

\begin{figure} [h]
	\begin{tikzpicture} 
	
	\node [] at (-.6,-2) {${A'}^4_4  $};
	\node at (0.2, -2) {$\sim$};
	
	\node at (1,-2.1) {$\sum\limits_{\text{Perm.}}$};
	\node[diffTree] (a) at (2.3,-2){};
	\draw [thick] (a) to +(90:.5);
	\draw [-|, thick, bend angle=30, bend left] (a.240) to +(210:.5);
	\draw [-|, thick] (a.270) -- +(270:.5); 
	\draw [-|, thick, bend angle=30, bend right] (a.300) to +(330:.5); 
	
	\node at (4,-2.1) {$+\sum\limits_{\text{Perm.}}$};
	\node[intTree] (a) at (5.3,-1.8){};
	\node[intTree] (b) at (5.9,-2.4){};
	\draw[thick, -|-, bend angle = 40, bend right] (a.290) to (b.160);
	\draw[thick, -|, bend angle = 20, bend left] (a.260) to +(220:.5);
	\draw[thick, -|, bend angle = 20, bend right] (b.190) to +(230:.5);
	\draw[thick] (b.0) to +(0:.4);
	\draw[thick] (a.90) to +(90:.4);
	
	\node at (0.2,-3.5) {=};
	
	\node at (1,-3.6) {$\sum\limits_{\text{Perm.}}$};
	\node[intTree] (a) at (2.3,-3.5){};
	\draw [thick] (a) to +(90:.5);
	\draw [-|, thick, bend angle=30, bend left] (a.240) to +(210:.5);
	\draw [-|, thick] (a.270) -- +(270:.5); 
	\draw [-|, thick, bend angle=30, bend right] (a.300) to +(330:.5); 
	
	\node at (3.8,-3.6) {$+\sum\limits_{\text{Perm.}}$};
	\node[diffTree] (a) at (5.1,-3.3){};
	\node[diffTree] (b) at (5.7,-3.9){};
	\draw[thick, -|-, bend angle = 40, bend right] (a.290) to (b.160);
	\draw[thick, -|, bend angle = 20, bend left] (a.260) to +(220:.5);
	\draw[thick, -|, bend angle = 20, bend right] (b.190) to +(230:.5);
	\draw[thick] (b.0) to +(0:.4);
	\draw[thick] (a.90) to +(90:.4);
	
	\end{tikzpicture}
	\caption{Construction of ${A'}^4_4$ according to \cref{ex_As4}. The interaction can either be included into the largest tree sum (here $b'_3$) or into $b'_{s-1}$ (here $b'_2$).} 
	\label{fig_offshell}
\end{figure}

\subsection{Loop amplitudes}\label{subs_interacting_loops}
The behaviour of loop amplitudes for the interacting diffeomorphism is  similar to the \cref{subs_free_loops}.

In \cref{subs_cuts} we argued that in tree graphs uncancelled internal edges amount to cutting the graph at that edge. 
If $\Gamma$ contains loops, an uncancelled internal edge can still be interpreted as a cut, but on the level of integrands. Assume $e$ is an uncancelled edge between vertices $v_1$ and $v_2$ and assume that there is another path between $v_1$ and $v_2$, not involving $e$ (i.e. $e$ is part of a loop). If $e$ is uncancelled, the integrand does not involve factors $x_e$. But then the integrand is the same one as if $v_1$ and $v_2$ would both contain one external onshell edge with momentum $p_e$ incoming or outgoing, respectively. Graphically, $e$ is cut and the two half edges become external uncancelled edges. As discussed in \cref{subs_cuts}, this does not require $x_e=0$ in the overall graph, i.e. the momentum does not actually need to be onshell. The loop integral can still be performed, assigning any value to the momentum.

\begin{theorem}\label{thm_Alm}
	Let $A^{(l)}_m$ be the sum of all connected (not only 1PI) $l$-loop Feynman integrals of the diffeomorphism of $\phi^s$-theory with $m$ external edges which are all onshell. Assume tadpole graphs vanish. Then $A^{(l)}_m$ coincides with the corresponding amplitude of $\phi^s$-theory, i.e. the diffeomorphism does not alter the $S$-matrix.
\end{theorem}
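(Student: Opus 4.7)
The plan is to extend the meta-vertex decomposition of Lemma~\ref{lem_loop} to the interacting case and then collapse each meta-vertex to its onshell value using the lemma immediately preceding this theorem.

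First, I would establish an interacting analogue of Lemma~\ref{lem_loop}: any non-tadpole connected $l$-loop diagram $\Gamma$ of the diffeomorphism of $\phi^s$-theory is built from meta-vertices ${A'}^{j_k}_{n_k}$ (Definition~\ref{def_Asjn}) connected by uncancelled propagators $i/x_e$. The argument mirrors the three-step proof of Lemma~\ref{lem_loop}: one minimal cut set $U$ of uncancelled edges partitions $\Gamma$ into tree components; summing over $U$ and over diagrams turns these components into the tree sums ${A'}^{j_k}_{n_k}$ via~\cite[Prop.~4.1]{KY17} combined with Lemma~\ref{lem_bsn}; and no pair of meta-vertices is connected by a single internal edge since this would contradict minimality of $U$. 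The only novelty relative to the free case is the presence of interaction vertices, but Theorem~\ref{thm_phis_Sn} guarantees that it suffices to keep the $s$-valent vertices $-i\lambda_s$ inside the meta-vertex tree sums; all $-iw_n^{(s)}$ with uncancelled neighborhood drop out because they would contribute to vanishing $S^{(s)}_n$.

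Second, I would exploit the hypothesis that all $m$ external edges of $\Gamma$ are onshell. For each meta-vertex, the $j_k$ edges attached to external edges of $\Gamma$ are onshell by assumption, and the remaining $n_k - j_k$ edges are cut internal edges which, by the discussion in Section~\ref{subs_cuts}, enter the meta-vertex amplitude without any $x_e$ factor, i.e.~as if they were onshell. Hence each meta-vertex collapses to ${A'}^0_{n_k}$. Applying the preceding lemma replaces ${A'}^0_{n_k}$ by the onshell tree amplitude of $\phi^s$-theory with $n_k$ external edges. Reassembling the meta-vertices through the cut propagators $i/x_e$ then reproduces precisely the sum of connected $l$-loop Feynman graphs of $\phi^s$-theory with $m$ onshell external edges; no new tadpole arises because tadpoles are assumed to vanish and the decomposition of Step~1 only produced non-tadpole topologies.

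The main obstacle I anticipate is the combinatorial bookkeeping in Step~1: as Example~\ref{ex_As4} illustrates, interaction vertices can be attributed to several meta-vertices in equivalent ways, so I need to fix a single convention (for example the one of Lemma~\ref{lem_bsn}, collecting all interaction vertices into $b'_{s-1}$-type subtrees) and verify that every diffeomorphism loop diagram is produced exactly once with its correct symmetry factor. Once this matching is in place, Steps~2 and~3 amount to a direct application of the already established onshell statements.
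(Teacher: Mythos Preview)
Your approach is sound and would work, but it differs from the paper's proof in a substantive way. You push the meta-vertex machinery of Lemma~\ref{lem_loop} through to the interacting case and then reduce each meta-vertex to the onshell tree sum ${A'}^0_{n_k}$, which by the preceding lemma equals the underlying $\phi^s$ tree amplitude; reassembling the cut propagators then reproduces the $\phi^s$ loop amplitude. The paper instead argues \emph{graph by graph}, avoiding the decomposition altogether: by Theorem~\ref{thm_phis_Sn} no edge adjacent to an interaction vertex may be cancelled, so any cluster of diffeomorphism vertices can only cancel edges among themselves; by Theorem~\ref{thm_bn} such a cluster collapses to $-ib_j x_e$ and must therefore cancel yet another edge, and the Euler characteristic forces this chain to terminate either at an external edge (yielding a factor $x_e=0$ since all externals are onshell) or by closing an entire loop (a tadpole, assumed to vanish). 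Hence only diagrams with \emph{no} diffeomorphism vertices survive, which are exactly the $\phi^s$ graphs.

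The trade-off: your route is more structural and recycles Lemma~\ref{lem_loop} verbatim, but the ``reassembly'' step you flag as the main obstacle is genuinely delicate---you must verify that summing meta-graphs built from ${A'}^0_{n_k}$ reproduces the $\phi^s$ loop sum with the right symmetry factors and without overcounting, since the minimal-cut prescription that controls Lemma~\ref{lem_loop} no longer forces every pair of meta-vertices to share at least two edges once interaction vertices (which never cancel) are present. The paper's argument sidesteps this entirely: it never needs to reconstruct the $\phi^s$ amplitude from pieces, because it shows directly that every non-$\phi^s$ contribution vanishes.
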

\begin{proof}
The case of trees, $l=0$, has been shown in \cref{thm_phis_Sn}.
The diffeomorphism of a free theory does not alter onshell loop amplitudes \cite[Thm.~4.7]{KY17}, hence, in order to possible influence onshell amplitudes, a graph must contain at least one interaction vertex. By \cref{thm_phis_Sn}, only the vertex $-i\lambda_s$ contributes to the sum of all integrands, the higher vertices $-iw^{(s)}_j$  \cref{diff_arb_wns} always add up to zero.  Now assume $\Gamma$ contains both interaction vertices $-i\lambda_s$ and diffeomorphism vertices $iv_n$ from \cref{diff_vn}. By \cref{thm_phis_Sn}, no edge adjacent to an interaction vertex must be cancelled, otherwise the graph is a contribution to a sum which eventually is zero. If a diffeomorphism vertex cancels an edge incident to another diffeomorphism vertex, by \cref{thm_bn} it results in a meta vertex $-ib_j x_e$ like \cref{free_A1n}, cancelling yet another adjacent edge $e$. This can be repeated, but is constrained by the Euler characteristic \cref{euler_graph}: Eventually, either all edges in a closed path or an external edge has to be cancelled. Cancelling all edges within any closed path results in a tadpole graph which vanishes by assumption.  But if all external edges are onshell, cancelling an external edge $e$ amounts to a factor $x_e=0$ and renders the integrand zero. 

Thus, only graphs without any diffeomorphism vertices contribute to the onshell non-tadpole $l$-loop amplitude. These are the graphs of the underlying $\phi^s$-theory. 
\end{proof}
The same proof holds if more than one interaction vertex is present in the underlying theory \cref{L_arb}. By \cref{Sn_lambda}, only the original vertices remain, resulting in the same Feynman graphs as in the underlying theory. 

\Cref{thm_Alm} is the graph-theoretic version of the result found in \cite{flume}. The key mechanism is very similar: A diffeomorphism gives rise to contributions proportional to the offshell-variables of the correlation functions under consideration. If all momenta are onshell, all diffeomorphism contributions obtain a zero prefactor.

As a consequence of \cref{thm_Alm}, one can view all quantum field theories as physically equivalent if they are related by a diffeomorphism. It is then possible to choose out of this equivalence class one particular representative with desirable properties. Nevertheless, the correlation functions for offshell momenta are altered by a diffeomorphism. This is exploited in the next section.

\section{Cancellation of quantum corrections}\label{sec_cancellation}

The goal of this section is to find a diffeomorphism of $\phi^s$ theory such that its  offshell correlation functions coincide with the ones of a free scalar quantum field theory. 
As mentioned in \cite{KY17}, this is especially interesting for the 2-point function, i.e. the propagator: Canonical quantization relies on the assumption that in infinite future and past, the states of an interacting theory asymptotically converge to free field states \cite[Sec. 1.4.2]{nakanishi_covariant_1990}. These states are then identified with particles observed in scattering experiments which allows to predict experimental outcomes through the LSZ-formalism \cite{lsz}. On the other hand, if any quantum field theory has the same 2-point-function as a free field, or is unitarily equivalent to a free field, this theory is itself a free field \cite{Pohlmeyer,Jost-Schroer,haag_quantum_1955}. See also  \cite{lutz} for an overview of the different no-go theorems regarding this phenomenon.

We construct a diffeomorphism of an interacting theory such that the offshell 2-point-function is free for a specific momentum. Since all diffeomorphism-related theories lie in a physical equivalence class (\cref{thm_Alm}), this might be useful to construct some new notion of asymptotic states which does not violate theorems from axiomatic quantum field theory.

\subsection{Tree level}\label{sub_cancellation_trees}

At least one external edge needs to be offshell for the diffeomorphism to possible have an influence, therefore we consider $b'_n$ from \cref{def_bsn} and  require
\begin{align*} 
b'_n &\overset = 0 \quad \forall n. 
\end{align*}
Using \cref{subs_sum_int}, this requirement is can be written in terms of $b_n$:
\begin{lemma}\label{lem_bsn_0}
	If $b_k=0 \; \forall k \neq s-1$ and 
	\begin{align*} 
	b_{s-1} &= -\frac{\lambda_s }{x_{1+2+\ldots + (s-1)}}
	\end{align*}
	then $b'_n=0 \; \forall n>1$.
\end{lemma}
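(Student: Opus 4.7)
The plan is to apply \cref{lem_bsn} to decompose $b'_n$ in terms of the $b_k$ and $b'_{s-1}$, and then check that every surviving piece vanishes under the stated hypotheses. No new combinatorial identity is needed; the work lies entirely in verifying that the momentum variables align when sub-trees are embedded.

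For the small cases the substitution is immediate: if $n < s-1$, then \cref{lem_bsn} gives $b'_n = b_n = 0$, while for $n = s-1$ it gives
\[
b'_{s-1} \;=\; b_{s-1} + \frac{\lambda_s}{x_{1+\ldots+(s-1)}},
\]
which vanishes directly by the second hypothesis.

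For $n > s-1$ I would use the structural description in \cref{lem_bsn}: every summand of $b'_n$ is either (i) a copy of $b'_{s-1}$ sitting at the top of the tree, inheriting the offshell edge of $b'_n$ and decorated by $-i\lambda_s$ vertices attached through uncancelled propagators to its lower legs, or (ii) a copy of some $b_k$ with $k \neq s-1$ similarly decorated. Summands of type (ii) vanish termwise by the hypothesis $b_k = 0$. Summands of type (i) vanish because, by momentum conservation, the offshell variable of the embedded $b'_{s-1}$ is exactly the variable appearing in the hypothesis on $b_{s-1}$, so the $n=s-1$ case applies again inside the embedding and the inserted $b'_{s-1}$ is itself zero.

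The only subtle point is this momentum identification: one must recognise that the convention in \cref{lem_bsn} that the single offshell edge of $b'_{s-1}$ is the offshell edge of the enclosing $b'_n$ forces the offshell variable of the nested $b'_{s-1}$ to match the variable fixing $b_{s-1}$ in the hypothesis. Once this is in place, the lemma follows by direct substitution with no further combinatorial input beyond \cref{lem_bsn}.
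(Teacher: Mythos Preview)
Your proof is correct and follows essentially the same route as the paper's: apply \cref{lem_bsn} to split into the cases $n<s-1$, $n=s-1$, and $n>s-1$, and in the last case note that every summand carries either a factor $b_k$ with $k\neq s-1$ or a factor $b'_{s-1}$, both of which vanish. The extra care you take with the momentum identification of the embedded $b'_{s-1}$ is exactly the point the paper addresses immediately \emph{after} the lemma (where the offshell variable is fixed to a constant $x_p$), so the two arguments match in content and in what they leave implicit.
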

\begin{proof}
	Follows from \cref{lem_bsn}: For $1<n<s-1$, setting $b_n=0$ amounts to $b'_n=0$. For $n=s-1$, the given choice of $b_{s-1}$ makes $b'_{s-1}$ vanish. The remaining $b_{n\geq s}$ vanish inductively since they  are proportional to $b_k$ for $k\leq n$ or $b'_{s-1}$, both of which are zero.
\end{proof}

The diffeomorphism coefficients $\{a_j\}_{j\geq 1}$ in \cref{def_diffeomorphism} need to be constants (with respect to momenta), hence, by \cref{thm_bn}, the $\{b_n\}_{n \geq 1}$ are constants, too. On the other hand, by \cref{lem_bsn_0}, the tree sum $b_{s-1}$ depends on the offshell variable $x_{1+2+\ldots + (s-1)}$ of the $s$-point-correlation function. Hence, this offshell variable needs to be constant, i.e. the attempted cancellation of quantum corrections only works for one specific offshell momentum. Call this momentum $p$, so
\begin{align*} 
x_{1+2+\ldots + (s-1)} &=:x_p\neq 0.
\end{align*}
For a fixed value of $x_p$ the quantity $b_{s-1}$ defined by \cref{lem_bsn_0} is a constant.

Next, we proceed to tree level amplitudes where more than one external edge is offshell, these are ${A'}^j_n$ as discussed in \cref{sec_offshelltree}. If $n<s$ they vanish with  the choice \cref{lem_bsn_0} since all $b_k=0$ for $k<s-1$. If $n \geq s$, ${A'}^j_n$ involves tree sums $b'_{s-1}(x_j)$ where $x_j$ is the offshell variable of one of the external edges. But by \cref{lem_bsn_0}, the  offshell variable needs to be the specific constant value $x_p$ in order for $b'_{s-1}$ to vanish. Thus, all offshell external edges in ${A'}^j_n$ need to have the offshell variable $x_p$ to make it vanish. Note that this does not fix a value of the momentum vector $p$ itself, only its magnitude. Kinematic configurations where all external offshell variables are either zero or $x_p$ are in principle possible in terms of overall momentum conservation.

\begin{theorem}\label{thm_adiabatic_phis}
	Let $s \geq 3$ be an integer and $\phi$  a scalar quantum field with mass $m \geq 0$ and   $\phi^s$-type interaction according to \cref{L_phis}. Let ${A'}^j_n$ be the connected tree level Feynman amplitude  with $n$ external edges of which $j>0$ are offshell. Assume all offshell external edges of ${A'}^j_n$ have the offshell variable $x_p$. Then the quantum field $\rho$   defined by 
	\begin{align*}
	\rho( x) &=\phi( x) - \frac{\lambda_s }{(s-1)! x_p}\phi^{s-1}( x) 
	\end{align*}
	has the same tree level amplitudes as a free quantum field with mass $m$, i.e. 
	\begin{align*} 
	{A'}^2_2 &= -ix_p\\
	{A'}^j_n &= 0 \ \forall n>2, j>0.
	\end{align*}	
\end{theorem}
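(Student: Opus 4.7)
The plan is to argue in four steps, paralleling the preparatory discussion preceding the theorem. \textbf{First}, I would read off the tree-sum coefficients $b_n$ directly from the given diffeomorphism: since the map is written as $\rho=\phi-\lambda_s/\!\left((s-1)!\,x_p\right)\phi^{s-1}$, expressing $\rho$ as an explicit polynomial in $\phi$, \cref{lem_inverse} identifies $b_1=1$, $b_{s-1}=-\lambda_s/x_p$, and $b_n=0$ for every other $n$. Substituting these values into \cref{lem_bsn_0}, with the sum variable $x_{1+\ldots+(s-1)}$ specialized to $x_p$, yields $b'_n=0$ for all $n\geq 2$ at this kinematic point.

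\textbf{Second}, the two-point claim ${A'}^2_2=-ix_p$ is immediate: at tree level the only connected graph with two external edges is the bare propagator, since every vertex of $\mathcal L_\rho$ (both diffeomorphism vertices $iv_n$ and interaction-diffeomorphism vertices $-iw^{(s)}_n$) is at least trivalent; and because the diffeomorphism is tangent to identity the quadratic part of $\mathcal L_\rho$ matches that of $\mathcal L_\phi$, so the amputated propagator reads $-ix_p$.

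\textbf{Third}, for ${A'}^j_n$ with $n>2$ and $j>0$ I would invoke the decomposition of \cref{sec_offshelltree}, which writes ${A'}^j_n$ as a sum over trees whose meta-vertices are $b_k$ and, carrying the interaction, $b'_{s-1}$ pieces. If $n<s$ the only available meta-vertices are $b_k$ with $k\leq n-1<s-1$, all vanishing by step one. If $n\geq s$, \cref{ex_As4} shows one may arrange the decomposition so that each interaction-carrying block appears as a $b'_{s-1}$ whose distinguished offshell leg is an \emph{external} offshell edge of the full amplitude; by hypothesis such edges carry variable $x_p$, so each such $b'_{s-1}$ vanishes by step one, while the remaining meta-vertices are $b_k$ with $k\neq s-1$ and are already zero.

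The main obstacle lies in this last case: one must verify that the gluing construction of \cref{sec_offshelltree} can always be reorganized so that every $b'_{s-1}$ factor is keyed to an external offshell edge rather than to an internal cut whose momentum-sum offshell variable is not constrained to equal $x_p$. The required flexibility is the one sketched after \cref{ex_As4}, where the interaction can equivalently be absorbed into $b'_{s-1}$ at the deepest level or into $b'_{n-1}$ at the topmost level; exploiting this freedom consistently, without double-counting interaction contributions, is the combinatorial core of the argument.
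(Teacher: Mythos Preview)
Your proposal follows exactly the paper's route: read off the $b_n$ via \cref{lem_inverse}, feed them into \cref{lem_bsn_0} to obtain $b'_n=0$ at the fixed offshell value $x_p$, and then invoke the gluing description of \cref{sec_offshelltree} to propagate this vanishing to all ${A'}^j_n$. The paper's own proof is three lines citing precisely these three ingredients and nothing more; in particular it does not address your flagged obstacle any more carefully than you do. The assertion that in the decomposition every $b'_{s-1}$ factor is keyed to the offshell variable of an \emph{external} edge (rather than an internal cut) appears only as an unproved remark in the discussion immediately preceding the theorem, and the paper's proof simply leans on that remark together with \cref{sec_offshelltree}.

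So your write-up is at least as complete as the paper's; the difference is that you have made explicit the one combinatorial step both arguments leave to the reader. One small sharpening: your worry about $b'_{s-1}$ being keyed to an \emph{internal} cut is not quite the issue, since in the gluing of \cref{subs_offshell} one always joins \emph{onshell} legs, so the distinguished offshell leg of each constituent tree sum is external by construction. The genuine subtlety is rather whether that external leg is one of the $j$ legs carrying $x_p$ or one of the $n-j$ onshell legs (where $b'_{s-1}$ would not vanish), and ensuring the decomposition can be organized so that the latter never occurs. That is the point where both your argument and the paper's rely on the flexibility noted after \cref{ex_As4}.
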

\begin{proof}
	By \cref{sec_offshelltree} it is sufficient to have $b'_j=0 $ $\forall j$ to reach ${A'}^k_n=0$ $\forall n>2$. By \cref{lem_bsn_0} we know the values of $b_j$ to reach $b'_n=0$, by \cref{lem_inverse} these $b_j$ are the series coefficients of $\rho(x)$.	
\end{proof}

\begin{lemma}\label{lem_adiabatic_aj}
	The diffeomorphism $\phi(\rho)$ defined in \cref{thm_adiabatic_phis} has the  coefficients
	\begin{alignat*}{3}
	a_{s-2} &=   \frac{\lambda_s }{(s-1)!x_p} && \\
	a_{j \cdot (s-2)} &=  \left(a_{s-2}\right)^j\cdot F_j(s-1, 1), \qquad && j \in \mb N \\
	a_k &= 0 , \qquad &&k \notin \mb N \cdot (s-2).
	\end{alignat*}
	Here,  $ F_j(s-1, 1)$ are the Fuss-Catalan numbers \cref{fc}.
\end{lemma}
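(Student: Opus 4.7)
The plan is to read off the inverse series $\rho(\phi)$ directly from Theorem~3.2 and Lemma~3.11, then invert it via the Lagrange inversion formula to obtain the coefficients $a_k$ of $\phi(\rho)$.

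First I would note that by \cref{lem_inverse} and \cref{lem_bsn_0}, the inverse diffeomorphism associated with the $\rho$ of \cref{thm_adiabatic_phis} is simply
\begin{align*}
\rho(\phi) &= \phi + \frac{b_{s-1}}{(s-1)!}\phi^{s-1} = \phi - \frac{\lambda_s}{(s-1)!\,x_p}\phi^{s-1},
\end{align*}
since all other $b_n$ vanish. Setting $c := -\lambda_s/((s-1)!\,x_p)$, I have $\rho = f(\phi)$ with $f(\phi)=\phi(1+c\phi^{s-2})$, so the quantity $\phi/f(\phi) = (1+c\phi^{s-2})^{-1}$ is a very clean power series in $\phi^{s-2}$.

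Next I would apply the standard Lagrange inversion formula in the convenient form
\begin{align*}
[\rho^n]\,\phi(\rho) &= \frac{1}{n}[\phi^{n-1}]\left(\frac{\phi}{f(\phi)}\right)^{n} = \frac{1}{n}[\phi^{n-1}](1+c\phi^{s-2})^{-n}.
\end{align*}
Expanding the right-hand side by the generalized binomial theorem gives $(1+c\phi^{s-2})^{-n}=\sum_{j\geq 0}(-1)^{j}\binom{n+j-1}{j}c^{j}\phi^{j(s-2)}$. The coefficient of $\phi^{n-1}$ is therefore non-zero exactly when $n-1=j(s-2)$ for some $j\in\mb N_0$, which proves $a_k=0$ whenever $k\notin \mb N\cdot(s-2)$.

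For $n = j(s-2)+1$, i.e.\ $k=j(s-2)$, I would then simplify
\begin{align*}
a_{j(s-2)} &= [\rho^{k+1}]\phi(\rho) = \frac{(-1)^{j}}{j(s-2)+1}\binom{j(s-1)}{j}c^{j}
\end{align*}
and recognize, using the identity $\binom{j(s-1)+1}{j} = \frac{j(s-1)+1}{j(s-2)+1}\binom{j(s-1)}{j}$, that the combinatorial factor equals $F_{j}(s-1,1)$ from \cref{fc}. Since $c=-a_{s-2}$, the sign $(-1)^{j}$ combines with $c^{j}$ to give $a_{s-2}^{j}$, yielding the claimed formula. The case $j=1$ just reproduces $a_{s-2}=\lambda_s/((s-1)!\,x_p)$ because $F_1(s-1,1)=1$.

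There is no real obstacle here beyond index-bookkeeping: the main care point is the shift between the Taylor index $n$ used in Lagrange inversion and the index $k$ in the definition $\phi(\rho)=\sum_{k\geq 0}a_k\rho^{k+1}$, and the translation of the binomial coefficient into the Fuss--Catalan form \cref{fc}, which is a short algebraic identity.
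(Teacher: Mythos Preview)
Your proof is correct and takes a genuinely different route from the paper's. The paper writes $\phi(\rho)=\rho\,C(\alpha\rho^{s-2})$ with $\alpha=a_{s-2}$, substitutes this into the defining relation $\rho=\phi-\alpha\phi^{s-1}$ from \cref{thm_adiabatic_phis}, and obtains the functional equation $C(t)=1+t\,C(t)^{s-1}$, which it then identifies with the known functional equation \cref{fc_generating_functional} of the Fuss--Catalan generating function $C_{s-1,1}(t)$. You instead apply Lagrange inversion in the form $[\rho^n]\phi=\tfrac{1}{n}[\phi^{n-1}](\phi/f(\phi))^n$ directly to $f(\phi)=\phi(1+c\phi^{s-2})$, extract coefficients by the binomial theorem, and recognise the resulting $\tfrac{1}{j(s-2)+1}\binom{j(s-1)}{j}$ as $F_j(s-1,1)$ via a one-line binomial identity. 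Your approach is more self-contained (it does not invoke an external characterisation of the Fuss--Catalan generating function) and yields the vanishing of $a_k$ for $k\notin(s-2)\mb N$ as an immediate by-product of the support of $(1+c\phi^{s-2})^{-n}$; the paper's approach is slightly more structural in that it exhibits $\phi/\rho$ as the generating function itself. One small remark: you do not actually need to appeal to \cref{lem_inverse} or \cref{lem_bsn_0} to obtain $\rho(\phi)$, since \cref{thm_adiabatic_phis} states the formula $\rho=\phi-\tfrac{\lambda_s}{(s-1)!x_p}\phi^{s-1}$ outright.
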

\begin{proof}By \cref{lem_bsn}, $b'_n=b_n$ for $n<s-1$. By \cref{thm_bn}, $b_n$ is a polynomial in $a_1, \ldots, a_{n-1}$. Hence if $a_k=0 \forall k<s-2$ then $b'_n=0 \forall n<s-1$ and $b_{s-1} = -(s-1)! a_{s-2}$. This fixes the claimed value of $a_{s-2} =:\alpha$. 
	
	Now define a function $C(t)$ such that
	\begin{align*} 
	\phi(\rho) = \rho \cdot C\left( \alpha \rho^{s-2} \right) .
	\end{align*}
	Insert this function for $\phi$ into  \cref{thm_adiabatic_phis} to obtain 
	\begin{align*} 
	\alpha \rho^{s-2} \cdot C^{s-1} \left( \alpha \rho^{s-2} \right) +1 &= C \left( \alpha \rho^{s-2} \right) .
	\end{align*}
	
	The generating function  of the Fuss-Catalan-Numbers \cref{fc},
	\begin{align}\label{fc_generating_series}
	C_{a,b}(t) &= \sum_{m=0}^\infty t^m F_m(a,b),
	\end{align}
	fulfills the functional equation 	\cite[theorem 2.1]{merlini_lagrange_2006}
	\begin{align}\label{fc_generating_functional}
	C_{a,b}(t) &= \left( tC_{a,b}^{\frac a b} (t)+1\right) ^b.
	\end{align}
	Thus $C(t)$ is the generating function of Fuss-Catalan-Numbers $\lbrace F_j(s-1, 1)\rbrace_j$, with argument $t=\alpha \rho^{s-2}$. Taking coefficients and identifying \cref{def_diffeomorphism} yields the values of $a_k$ for $k>s-2$.
\end{proof}

\subsection{Loop amplitudes}\label{subs_cance_loops}

To cancel the integrands of loop amplitudes, the idea is that the diffeomorphism of an interacting field should mimic the behaviour of the diffeomorphism of a free field in \cref{subs_free_loops}. There, \cref{lem_loop} was obtained because the tree sums $A^0_n=S_n$  of the free diffeomorphism vanish by \cref{lem_Sn_free} if the external momenta are onshell. We implicitly used the vanishing of tadpole graphs in \cref{lem_tadpole,lem_loop}.  

To cancel interaction vertices, by \cref{thm_adiabatic_phis} the situation is similar to the free case in terms of combinatorics, only that the tree sums ${A'}^j_n$ vanish if all their $j$ external offshell momenta have the fixed offshell variable $x_p\neq 0$ instead of being onshell. On the other hand, the tree sums ${A'}^0_n$ do not vanish, hence a graph may have arbitrarily many such meta-vertices as internal vertices. Assume that  there are $j>0$ external edges with offshell variable $x_p\neq 0$ and all remaining external edges are onshell. 
Let $e$ be one of the external edges with offshell variable $x_p$. Then $e$ is connected to a meta vertex ${A'}^j_n$ with $j>0$ and $n>2$ which vanishes due to \cref{thm_adiabatic_phis}. By \cref{subs_cuts}, the Feynman integrand of the sum of all amplitudes with given loop number is proporitonal to ${A'}^j_n$, hence it is zero. Consequently, as soon as there is at least one external edge with offshell variable $x_p$ and all remaining edges are either also $x_p$ or onshell, the integrand vanishes.  This implies \cref{thm_adiabatic_phis} holds not only for tree level amplitudes, but also for the integrands of loop amplitudes if tadpoles vanish.

\section{Scalar field with arbitrary propagator}\label{sec_arbitrary}

\subsection{Generalized free Lagrangian density}

By \cref{thm_bn} the tree level-amplitudes of the diffeomorphism are independent of masses and momenta, which suggests that the actual form of the propagator is unimportant for the result.  In this section we show that  this is the case. We restrict to a free scalar field, that is, the Lagrangian is quadratic in the field variable $\phi$ and its derivatives. Lorentz invariance dictates that any derivative $\partial_\mu \phi$ be contracted with $\partial^\mu \phi$. By partial integration, all derivatives can be concentrated on one of the two field variables up to a vanishing total derivative. We assume a  Lorentz invariant free scalar Lagrangian density of the form 
\begin{align}\label{diff_skalar_beliebig_L}
\mc L_\phi &= \frac 12 \phi(x) \hat X \phi(x)
\end{align}
where $\hat X$ is a linear Lorentz invariant differential operator
\begin{align*} 
\hat X &= \beta_0 + \beta_1 \left(-\partial_\mu \partial^\mu\right)  + \beta_2  \left( -\partial_\mu \partial^\mu  \right) ^2 + \ldots = \sum_{k=0}^\infty \beta_k \left( -\partial_\mu \partial^\mu  \right) ^k.
\end{align*}
$\left \lbrace \beta_j \right \rbrace _j$ are spacetime-independent constants. The ordinary free field \cref{Lfree} amounts to $\beta_0=-m^2, \beta_1 = 1, \beta_{j>1}=0$. Recently, Lagrangians involving higher derivatives have for example been studied in the context of fixed points in more than four spacetime dimensions \cite{gracey_higher_2017} or conformal field theory \cite{brust_free_2017}.
We do not discuss here for which values of $\beta_j$  the Lagrangian \cref{diff_skalar_beliebig_L} describes a physically meaningful, e.g. unitary or renormalizable, theory, but are just interested in its formal behaviour under field diffeomorphism.  

The classical field defined by \cref{diff_skalar_beliebig_L} has the equation of motion
\begin{align}\label{diff_skalar_beliebig_eom}
\hat X \phi(x) &= 0
\end{align}
which is linear as desired for a free field. In momentum space, the operator $\hat X$ acts as multiplication,
\begin{align*}
\hat X \phi(  x) &= \int \frac{\d^4  k }{(2 \pi)^4 }\underbrace{\left( \beta_0 + \beta_1 k_\mu k^\mu  +\beta_2 k_\mu k^\mu k_\nu k^\nu + \ldots  \right)}_{=:X_k} \phi( k) e^{-ik x}.
\end{align*}

The propagator of the field $\phi$ is defined as the inverse of the field differential operator \cref{diff_skalar_beliebig_eom} (i.e. as a Green's function) in momentum space,
\begin{align}\label{diff_skalar_beliebig_propagator}
 \Gamma_{2, \text{free}}^{-1} (k) = \langle \phi(k)  \phi(-k)\rangle &= \frac{i}{X_k}.
\end{align}
As long as $\beta_1 \neq 0$, this propagator can also be obtained  by the standard perturbative calculation of propagators: Use the offshell variable \cref{def_offshellvariable} to write $X_k = x_k + X'_k$ where $X'_k$ contains all higher derivative terms and is to be treated as a perturbation. If  $\frac{i}{x_k}$ is identified as the propagator of $\phi$, the remainder $X'_k$  gives rise to a 2-valent vertex with Feynman rule $iX'_k$. Summing arbitrary many of these vertices, connected with a propagator $\frac{i}{x_k}$, one obtains a geometric series with the sum $\frac{i}{x_k + X'_k } = \frac{i}{X_k}$.  

Note that in this setting, the definition of ``offshellness'' is according to this propagator, i.e. the momentum $k$ is offshell iff $X_k\neq 0$, regardless if $x_k=0$ or not. We do not care here if the pole $X_k=0$ of the propagator \cref{diff_skalar_beliebig_propagator} corresponds to a one-particle state (i.e. is a simple pole with unit residue) as in the standard interpretation of perturbative quantum field theory.

\subsection{Diffeomorphism Feynman rules}

Applying the diffeomorphism \cref{def_diffeomorphism} to \cref{diff_skalar_beliebig_L} gives rise to the Lagrangian
\begin{align}\label{diff_skalar_beliebig_L_rho}
\mc L_\rho &= \frac 12 \sum_{j=0}^\infty \sum_{k=0}^\infty a_j a_k \rho^{j+1} \hat X \rho^{k+1}.
\end{align}
Since $a_0=1$, the propagator of $\rho$ coincides with that of $\phi$,  
\begin{align*}
\left \langle \rho( p) \rho(-p)\right \rangle  &= \frac{i}{X_p}.
\end{align*}

To derive the vertex Feynman rules of $\rho$, consider the Fourier transform of the action of $\hat X$ on a product of fields. 
\begin{align*}
\hat X \phi^n (  x) &=  \idotsint \frac{\d^4   k_1}{(2\pi)^4} \cdots \frac{\d^4  k_n}{(2\pi)^4}  \phi( k_1) \cdots \phi( k_n)  X_{1+\ldots + n} \cdot  e^{-i   k_1 x} \cdots e^{-i  k_n   x}  .
\end{align*}
We have defined $X_{1+\ldots +n}$ analogous to \cref{def_offshellvariable}, e.g.
\begin{align*} 
X_{1+2} &= \beta_0 + \beta_1 \left( k_1 + k_2 \right) _\mu \left( k_1 + k_2 \right) ^\mu + \beta_2\left( \left( k_1 + k_2 \right) _\mu \left( k_1 + k_2 \right) ^\mu \right)^2+ \ldots. 
\end{align*}
The $n$-valent diffeomorphism vertex arises from terms $\phi^j \hat X \phi^k$ in \cref{diff_skalar_beliebig_L_rho} where $j+k=n$. Summing over all permutations of $n$ edges, this amounts to $j!$ times a sum over all ways to choose $k$ out of $n$, each of them multiplied by $k!$ because $X_{1+\ldots + k}$ is symmetric under permutation of its indices. Hence the vertex Feynman rule is 
\begin{align}\label{diff_skalar_beliebig_vn}
i v_n &= i \frac 12 \sum_{k=1}^{n-1} a_{n-k-1} a_{k-1} (n-k)! k! \sum_{P\in Q^{(n,k)}} X_{P }.
\end{align}
The latter sum consists of all $\abs{ Q^{(n,k)}  } = \binom n k$ possibilities to choose $k$ out of $n$ external edges without distinguishing the order,
\begin{align}\label{Qnk} 
Q^{(n,k)} :=\left \lbrace \left \lbrace  j_1, \ldots, j_k \right \rbrace   \subseteq  \left \lbrace 1, \ldots, n \right \rbrace \right \rbrace   .
\end{align}
Of course, these are the same partitions which make up elementary symmetric polynomials, i.e. the elementary symmetric polynomial of degree $k$ in variables $\left \lbrace z_1, \ldots, zn \right \rbrace $ is $e_k(\left \lbrace z_1, \ldots, z_n \right \rbrace  )= \sum_{P\in Q^{(n,k)}} \prod_{z\in P} z $ .
\begin{example}
	For $n=5, k=3$ there are ten summands
	\begin{align*} 
	\sum_{P\in Q^{(5,3)}} X_{P} &= X_{1+2+3} + X_{1+2+4} + X_{1+2+5}+ X_{1+3+4} + X_{1+3+5} \\
	&\qquad + X_{1+4+5} + X_{2+3+4} + X_{2+3+5} + X_{2+4+5} + X_{3+4+5}.
	\end{align*}
	The indices correspond to the summands of the elementary symmetric polynomial
	\begin{align*} 
	e_3(\left \lbrace z_1, \ldots, z_5 \right \rbrace  ) &= z_1 z_2 z_3 + z_1 z_2 z_4 + \text{ (6 more) } +  z_2 z_4 z_5+ z_3 z_4 z_5.
	\end{align*}
\end{example}

Momentum is conserved at each vertex, $ p_1+ \ldots +  p_n =  0$, hence only the elements of $Q^{(n, k)}$ for $k \leq \frac n 2$ are linearly independent. For $k>\frac n 2$, each $X_P$ coincides with a term which was already present, this mechanism effectively cancels the prefactor $\frac 1 2$ in the Feynman rule \cref{diff_skalar_beliebig_vn}.  If $n$ is even, the terms in  $Q^{(n, \frac n 2)}$ coincide pairwise, again cancelling the $\frac 1 2$.

\begin{example}Using $X_{1+2} \equiv X_3$ etc., the 3-valent vertex \cref{diff_skalar_beliebig_vn} is
	\begin{align*}
	iv_3 &= i \frac 12 a_1 2! 1!\left( X_1+X_2 + X_3 \right) + i   \frac 12 1! 2! \left( X_{1+2} + X_{1+3}+ X_{2+3} \right)  \\
	&= 2ia_1 \left( X_1 + X_2 + X_3 \right) . 
	\end{align*}
	This result coincides with \cref{free_vertex_picture}, but all higher vertices do not, e.g.
	\begin{align*}
	iv_4 	&= 6 ia_2 \left( X_1 + X_2 + X_3 + X_4 \right) + 4i a_1^2 \left( X_{1+2} + X_{1+3} + X_{2+3} \right) . 
	\end{align*}
	This is because for higher than second powers, the sum of momenta does not decompose, i.e. $X_{1+2}$ cannot generally be expressed as $X_1, X_2$ and $m^2$.
\end{example}

\subsection{Cancellation of internal edges}

The vertex Feynman rule \cref{diff_skalar_beliebig_vn} involves summands $X_j$ which directly cancel an adjacent propagator $\frac{i}{X_j}$, but also summands $X_{j+k+\ldots}$. These summands correspond to a partition of the external edges into two subsets and it turns out they cancel the corresponding contribution of two vertices connected with an internal edge.

\begin{lemma}\label{diff_skalar_beliebig_lem_intern}
	Let $e$ be an internal edge in a tree sum $A^n_n$ from \cref{def_Ajn}. Then there is no summand proportional to $X_e$ in $A^n_n$. 
\end{lemma}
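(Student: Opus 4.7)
The plan is to adapt the classical-equivalence argument of \cref{lem_inverse2} to the arbitrary $\hat X$ setting by computing $A^n_n$ in two different ways. The first is the direct Feynman-diagram sum from $\mathcal L_\rho$ using the vertex rules $iv_n$ of \cref{diff_skalar_beliebig_vn}, which is the definition of $A^n_n$. The second is the Wick expansion of the correlator $\langle T\rho(x_1)\cdots\rho(x_n)\rangle$ computed in the underlying free $\phi$ theory (with Lagrangian $\frac{1}{2}\phi\hat X\phi$), treating $\rho$ as the composite operator $\rho(\phi)=\sum_{j\geq 1}\frac{b_j}{j!}\phi^j$. The identification of $b_j$ as the inverse-diffeomorphism coefficient in \cref{lem_inverse} is purely algebraic and holds for any $\hat X$, since Lagrange inversion depends only on the $a_j$.

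In the Wick expansion, each connected tree-level contraction yields a tree multigraph $\tau$ on $\{x_1,\ldots,x_n\}$ with $n-1$ edges, contributing $\prod_i b_{j_i}$ (where $j_i=\deg_\tau x_i$) times combinatorial factors times $\prod_{e\in\tau}\frac{i}{X_{p_e}}$, one propagator per edge, with $p_e$ determined by summing external momenta on one side of the cut. After amputation by $\prod_i(-iX_{p_i})$, each summand is a rational function whose denominator is $\prod_{e\in\tau}X_{p_e}$ and whose numerator consists only of constants (products of $b_{j_i}$ and combinatorial factors) together with external $X_{p_i}$ factors left over from amputation. Crucially, the internal-edge variables $X_{p_e}$ appear only in denominators and never in any numerator.

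The two computations agree by the classical-equivalence theorem for invertible field redefinitions at tree level: one has $\phi_{\mathrm{cl}}=\phi(\rho_{\mathrm{cl}})$ and $S_\phi[\phi(\rho)]=S_\rho[\rho]$, so the tree-level generating functional of $\rho$ in $\mathcal L_\rho$-theory equals that of the composite operator $\rho(\phi)$ in the $\phi$-theory. This is precisely the reasoning that underlies \cref{lem_inverse2}, and it extends to $n$-point functions without modification, because neither the identification of $b_j$ nor the Wick expansion depend on any specific form of $\hat X$. Hence $A^n_n$ computed via the Feynman diagrams of $\mathcal L_\rho$ equals the amputated Wick expansion, which manifestly has no internal $X_e$ in any numerator.

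The main obstacle is giving a clean, self-contained justification of the field-redefinition equivalence for arbitrary $\hat X$. As a direct combinatorial backup one can verify the $X_e$-cancellation explicitly: for any tree $T$ with internal edge $e$ whose endpoints $v_1,v_2$ have valences $n_1,n_2$, the $X_e$-coefficient in $\mathcal A(T)$ evaluates to $-i\,a_{n_1-2}a_{n_2-2}(n_1-1)!(n_2-1)!\,X_e$ times the amplitude of the remaining subtrees, which is cancelled exactly by the $X_{E_1}=X_e$ summand produced by the merged vertex of valence $n_1+n_2-2$ in the contracted tree $T/e$, whose coefficient in \cref{diff_skalar_beliebig_vn} carries the opposite sign. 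Summing this cancellation over all trees that contain $e$ as an internal edge completes the proof.
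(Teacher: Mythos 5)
Your ``direct combinatorial backup'' is precisely the paper's proof: isolate the subtree $T$ consisting of $e$ and its two endpoints of valences $n_1,n_2$, note that a numerator factor $X_e$ requires both endpoints to cancel $e$, giving $-i\,a_{n_1-2}a_{n_2-2}(n_1-1)!(n_2-1)!\,X_e$ from $i\cdot\frac{i}{X_e}\cdot i$, and pair this against the contracted tree $T/e$ whose merged vertex of valence $n_1+n_2-2$ contains exactly the two partitions $X_P=X_{P^c}=X_e$, whose $\tfrac12$ prefactors combine to $+i\,a_{n_1-2}a_{n_2-2}(n_1-1)!(n_2-1)!\,X_e$. That argument is correct and complete at the same level of rigour as the paper, so the lemma is established.

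Your primary route, however, should not carry the weight you give it. The tree-level equivalence of the $\rho$-theory correlators with Wick contractions of the composite operator $\rho(\phi)$ in the free $\phi$-theory is exactly the kind of path-integral/field-redefinition statement this paper is deliberately avoiding and, in effect, re-deriving combinatorially; invoking it here is close to assuming what the surrounding development (e.g.\ the consistency remark after \cref{thm_A1n} identifying $C_n$ with $b_{n-1}$) is meant to establish for arbitrary $\hat X$. You flag this yourself as ``the main obstacle,'' and it is a genuine one: without an independent, non-path-integral proof of that equivalence for higher-derivative kinetic operators, the first three paragraphs are a heuristic, not a proof. Since the backup paragraph stands on its own, I would present it as the proof and relegate the Wick-expansion picture to motivation.
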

\begin{proof}
	Consider one arbitrary tree  $A\in A^n_n$. 
	Since $e$ is internal, it is connected to two vertices. Let their valence be $j$ resp. $k$, then $e$ together with the two vertices form a subtree $T$ which has $j+k-2\leq n$ external edges, some (or all) of which may be external edges of $A^n_n$. Its Feynman rule is
	\begin{align*}
 	T&=	i v_j \cdot \frac{i}{X_e}\cdot i v_k,
	\end{align*}
	We identify the external edges of $T$ with numbers   $1, \ldots, j+k-2$ and consider terms proportional to $X_e$ in $T$.  This requires both vertices $v_j, v_k$ to cancel the edge $e$ connecting them. By  \cref{diff_skalar_beliebig_vn} this summand has the Feynman amplitude 
	\begin{align*}
	i    a_{k-2}  (k-1)!  X_e   \frac{i}{X_e}  i a_{j-2} (j-1)! X_e &= -i a_{j-2} a_{k-2} (j-1)! (k-1)! X_e.
	\end{align*}
	
	The momentum running through $e$ corresponds to one way to partition the external momenta  $\left \lbrace 1, \ldots, j+k-2 \right \rbrace $ of $T$ into two sets, namely into  $(k-1)$ and $(j-1)$ elements.
	
	The tree sum $A^n_n$ also contains a tree $A'$ where  $T$ is replaced by a vertex $iv_{j+k-2}$ at the same positon, see \cref{diff_skalar_beliebig_T}. By \cref{diff_skalar_beliebig_vn} this vertex is made of summands proportional to $X_P$ where $P$ is any way to partition the $j+k-2$ external edges into two sets. Especially, precisely two such partitions resembles the momentum running through $e$ such that $X_P=X_e$. They contribute to $iv_{j+k-2}$ with an amplitude 
	\begin{align*}
	i \frac 12  a_{j-2}a_{k-2} (j-1)! (k-1)! X_e+ i \frac 12 a_{k-2} a_{j-2} (k-1)! (j-1)!X_e &= i a_{j-2} a_{k-2} (j-1)! (k-1)! X_e.
	\end{align*}
	This contribution cancels the contribution from $T$. 
	
	 We did not impose any conditions on $T$, therefore the cancellation works in any case: Whenever there is an internal edge, the sum $A^n_n$ contains the same term with a vertex at the same position which cancels the contribution proportional to $X_e$.
\end{proof}

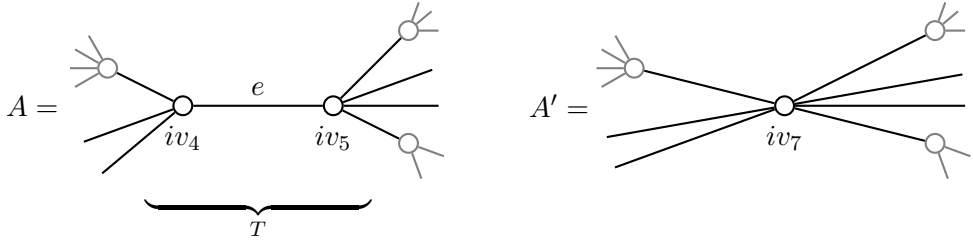
\begin{figure}[htbp]
	\centering
	\begin{tikzpicture}

	\node at (-1,0) {$A=$};

	\node [diffVertex,label=below:$iv_4$] (v1) at (1,0){};
	\node [diffVertex,label=below:$iv_5$] (v2) at (3,0){};
	\node [diffVertexGray] (v3) at (4, 1){};
	\node [diffVertexGray] (v4) at (4, -.5){};
	\node [diffVertexGray] (v5) at (0, .5){};
	 
	\draw [thick] (v1) -- node[above]{$ e$} (v2);
	\draw [thick] (v1) -- (v5);
	\draw [thick] (v3) -- (v2);
	\draw [thick] (v4) -- (v2);

	\draw [-,thick] (v1) -- +(200:1.4);
	\draw [-,thick] (v1) -- +(220:1.4);
	\draw [-,thick] (v2) -- +(0:1.4); 
	\draw [-,thick] (v2) -- +(20:1.4); 
	\draw [gray,thick] (v3) -- +(70:.4);
	\draw [gray,thick] (v3) -- +(40:.4);
	\draw [gray,thick] (v3) -- +(0:.4);
	\draw [gray,thick] (v4) -- +(290:.5);
	\draw [gray,thick] (v4) -- +(340:.5);
	\draw [gray ,thick] (v5) -- +(120:.5);
	\draw [gray ,thick] (v5) -- +(150:.5);
	\draw [gray,thick] (v5) -- +(180:.5);
	\draw [gray,thick] (v5) -- +(210:.5);

	\node at (2, -1.5) {$\underbrace{\hspace{3cm} }_{T}$};

		\node at (6,0) {$A'=$};

	\node [diffVertex,label=below:$iv_7$] (v1) at (9,0){};
	\node [diffVertexGray] (v3) at (11, 1){};
	\node [diffVertexGray] (v4) at (11, -.5){};
	\node [diffVertexGray] (v5) at (7, .5){};
	 
	\draw [thick] (v1) -- (v5);
	\draw [thick] (v3) -- (v1);
	\draw [thick] (v4) -- (v1);

	\draw [-,thick] (v1) -- +(190:2.4);
	\draw [-,thick] (v1) -- +(200:2.4);
	\draw [-,thick] (v1) -- +(0:2.4); 
	\draw [-,thick] (v1) -- +(10:2.4); 
	\draw [gray,thick] (v3) -- +(70:.4);
	\draw [gray,thick] (v3) -- +(40:.4);
	\draw [gray,thick] (v3) -- +(0:.4);
	\draw [gray,thick] (v4) -- +(290:.5);
	\draw [gray,thick] (v4) -- +(340:.5);
	\draw [gray ,thick] (v5) -- +(120:.5);
	\draw [gray ,thick] (v5) -- +(150:.5);
	\draw [gray,thick] (v5) -- +(180:.5);
	\draw [gray,thick] (v5) -- +(210:.5);

	\end{tikzpicture}
	\caption{ Illustration of contributions to a tree sum for $j=4, k=5$. $A$ contains a subtree $T$ consisting of one internal edge $e$ connected to two vertices whereas in $A'$ that subtree is replaced by a single vertex.}
	\label{diff_skalar_beliebig_T}
\end{figure}

The vertex Feynman rules \cref{diff_skalar_beliebig_vn} allow for a transparent discussion of the cancellation of internal edges in tree sums. Therefore we believe that they are a more intuitive form than \cref{diff_vn} even in the case of an ordinary quadratic propagator. Especially, it is nice to see how the Lagrangian being quadratic in $\phi$ (i.e. free) translates to the Feynman rules containing partitions of edges into two subsets. The special discussion of possible constant terms in the Feynman rules is also no longer necessary: The vertex \cref{diff_skalar_beliebig_vn} contains only terms proportional to some $X_P$, the constants $m^2$ in \cref{diff_vn} were mere artifacts of decomposing $x_{i+j+\ldots}$.

\begin{theorem}\label{thm_A1n}
	For any $n>2$ there is a  $C_n (a_1, a_2, \ldots)$ which does not depend on kinematics such that
	\begin{align*}
	A_n^1 = -iC_n \cdot \left( X_1 + \ldots + X_n \right) .
	\end{align*}
\end{theorem}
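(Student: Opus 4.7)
My plan is to combine the internal-edge cancellation from \cref{diff_skalar_beliebig_lem_intern} with symmetry and a degree count to pin down the structure of $A_n^1$. Since, by its construction as a symmetrized sum over the choice of which single external edge is offshell, $A_n^1$ is invariant under permutations of its external labels, it suffices to evaluate it in the configuration $X_1\neq 0$, $X_2=\ldots=X_n=0$ and show the result equals $-iC_n X_1$ for some $C_n$ depending only on the diffeomorphism coefficients $\{a_j\}$; the full statement then follows by symmetrization.

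To carry this out I would use the gluing decomposition of $A_n^n$ sketched in \cref{subs_offshell}: every contribution to $A_n^n$ either collapses to a single effective meta-vertex with all internal edges cancelled, or factorises across one or more uncancelled internal edges $e$ carrying propagators $i/X_e$. In the chosen kinematic configuration, any cut component that does not contain the one offshell edge has all of its own external edges onshell, and I would argue by induction on the number of vertices that such a fully-onshell subamplitude vanishes. The inductive step uses \cref{diff_skalar_beliebig_lem_intern} to exclude the only mechanism by which the onshell restriction could be compensated, namely an $X_e$-factor in the numerator from a vertex cancelling the corresponding propagator; thus all contributions involving uncancelled internal edges die once one restricts from $A_n^n$ to $A_n^1$.

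What remains is a polynomial in the external $X_j$'s alone: no $1/X_e$ factors survive because every internal edge has been cancelled, and no $X_e$ numerators survive by \cref{diff_skalar_beliebig_lem_intern}. A degree count now finishes the job: each of the $n-1$ vertices in a tree contributes one factor linear in its adjacent $X_P$'s, and the $n-2$ internal propagators absorb all but one of those linear factors, so the surviving amplitude is linear in the external $X_j$'s. Setting $X_2=\ldots=X_n=0$ therefore yields $-iC_n X_1$ with $C_n$ a function of the $\{a_j\}$ alone, and symmetrization produces the asserted form $A_n^1 = -iC_n(X_1+\ldots+X_n)$.

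The main obstacle I anticipate is the induction that eliminates fully-onshell subamplitudes, since one must establish it without circularly invoking \cref{thm_A1n} itself. I expect this to require a simultaneous induction on the number of vertices and on the number of uncancelled internal edges, feeding \cref{diff_skalar_beliebig_lem_intern} into the key cancellation and carefully using the observation stressed in \cref{subs_cuts} that an external onshell edge only imposes $X_j=0$ and leaves quantities like $X_{i+j}$ completely unconstrained, so the remaining $1/X_e$ denominators on the non-vanishing side of a cut do not accidentally become singular.
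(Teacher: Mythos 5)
Your argument is correct and rests on the same two pillars as the paper's proof --- \cref{diff_skalar_beliebig_lem_intern} and the Euler-characteristic degree count --- but you organize it more elaborately than necessary. The paper argues directly on the form of an individual term: a tree with $V$ vertices has $V-1$ internal propagators, so each term is homogeneous of degree $V-(V-1)=1$ in the $X$'s; once the lemma removes every numerator factor $X_e$ associated with an internal edge of some tree in the sum, each surviving term has the shape $C\, X_{j_1}\cdots X_{j_{k+1}}/(X_{e_1}\cdots X_{e_k})$ with only \emph{external} variables upstairs, and with at most one external edge offshell only the $k=0$ terms $c_n X_j$ survive. This single observation covers both halves of your case split (cut components not containing the offshell leg, and fully-cancelled contributions), so the simultaneous induction you flag as the main obstacle is not needed: the vanishing of a fully-onshell subamplitude is an immediate corollary of the same degree count plus the lemma, since every surviving numerator factor is then an external $X_j=0$. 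One small slip: a tree with $n$ external legs has $V\le n-2$ vertices and $V-1$ internal edges, not $n-1$ and $n-2$; only the net excess of one numerator factor over denominator factors matters, and that part of your count is right.
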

\begin{proof}
	By the Euler characteristic \cref{euler_graph}, a tree $A\in A_n^1$ with $\abs{V_A}$ vertices contains $\abs{V_A}-1$ internal edges. By \cref{diff_skalar_beliebig_vn} each vertex is proportional to some $X$. Each internal edge has an propagator proportional to $\frac{1}{X}$ (these are not the same $X_k$, we just count powers of $X$). Consequently, $A_n^1$ is in total proportional to $X^1$. 
	
	By \cref{diff_skalar_beliebig_lem_intern} there is no summand $\propto X_e$ with any internal edge $e$ of $A^1_n$. Also, each vertex \cref{diff_skalar_beliebig_vn} is proportional to $X^1$ (and not e.g. quadratic in $X$) and each external edge is connected to only one vertex in any $A\in A^1_n$. Therefore there can be no summand proportional to $X_j^{k>1}$ for an external edge $j$. Further, there are no factors $X^{-1}_j$ for external edges $j$ since their propagators are not included in $A^1_n$. Consequently, an individual tree $A\in A^1_n$ must be of the form $C_n \frac{X_j X_k \cdots }{X_e X_f \cdots}$ where $j\neq  k\neq  \ldots$ are external momenta and $e\neq  f\neq  \ldots$ internal edges (i.e. partitions of the external momenta) and there is one more factor in the numerator than in the denominator. But in $A^1_n$, only one of the $n$ external edges is offshell, consequently there can be at most one non-zero factor in the numerator. The only non-zero term contributing to $A$ is of the form 	
	  $X_j \cdot c_n$ were $c_n$ does not depend on any $X$. Since $X$ represent the only possible dependence on kinematics in the Feynman rules \cref{diff_skalar_beliebig_vn}, the constant $c_n$ must be independent of kinematics. 
	  
	  The tree sum $A^1_n$ is symmetric in external momenta, hence it must be proportional to $\sum_{j=1}^n X_j $ where again the proportionality constant does not depend on any $X_P$ but only on the diffeomorphism-parameters $\left \lbrace a_1, \ldots,  \right \rbrace  $.

\end{proof}
Consistency of \cref{thm_A1n,def_bn} in the case $X_k \rightarrow x_k$ requires that $C_n= b_{n-1}$ are the known coefficients from \cref{thm_bn} resp \cref{lem_inverse2}.

\begin{lemma}
	The diffeomorphism does not alter the $S$-matrix and $S_n=A_n^0=0 \ \forall n>2$ .
\end{lemma}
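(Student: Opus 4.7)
The plan is to adapt the combinatorial skeleton of \cref{lem_Sn_free,lem_loop} from the quadratic-propagator setting to the generalized propagator \cref{diff_skalar_beliebig_propagator}. The two ingredients that made those proofs work were (i) the fact that a tree sum with a single offshell external edge is proportional to the sum of external offshell variables, and (ii) that uncancelled internal edges of a tree sum behave as cuts, so large diagrams factor into tree-sum ``meta-vertices.'' Both ingredients have already been re-established for arbitrary $\hat X$: \cref{thm_A1n} provides (i), and \cref{diff_skalar_beliebig_lem_intern} provides (ii).

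First I would deal with the tree statement $S_n = A_n^0 = 0$. By \cref{thm_A1n},
\begin{align*}
A_n^1 &= -i C_n \left( X_1 + \ldots + X_n \right).
\end{align*}
The onshell tree sum $A_n^0$ is obtained from $A_n^1$ by additionally setting the single remaining offshell edge onshell, i.e.\ setting its $X$ to zero. Since $A_n^1$ is linear in all $X_j$, this yields $A_n^0 = 0$ for every $n > 2$. In particular $S_n = A_n^0 = 0$.

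Next I would handle the $S$-matrix statement by repeating the three-step proof of \cref{lem_loop} verbatim, using the generalized cancellation \cref{diff_skalar_beliebig_lem_intern} in place of \cref{thm_bn}. Given a connected non-tadpole graph $\Gamma$, choose a minimal set $U$ of uncancelled internal edges; cutting along $U$ decomposes $\Gamma$ into tree pieces $\Gamma_1, \dots, \Gamma_r$ of external valence $n_k$ and offshell count $j_k$. Summing over all minimal $U$ and all trees filling each piece replaces each $\Gamma_k$ by the tree sum $A_{n_k}^{j_k}$, exactly as before, because the Feynman rule \cref{diff_skalar_beliebig_vn} still treats an uncancelled edge $e$ as a formal onshell external edge of the adjacent tree sum (the $X_e$ factors are precisely the ones that are absent, by \cref{diff_skalar_beliebig_lem_intern}). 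If every external edge of $\Gamma$ is onshell and $\Gamma$ is not a tadpole, then every meta-vertex $A_{n_k}^{j_k}$ has $j_k = 0$, and each such meta-vertex vanishes by the tree statement just proved. Consequently the onshell connected integrand coincides with that of the undeformed free theory, and the diffeomorphism leaves the $S$-matrix invariant.

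The main obstacle I anticipate is confirming that the step~(3) of the proof of \cref{lem_loop}, which removes redundant decompositions where a single uncancelled edge connects two meta-vertices, still applies. In the quadratic case one argued via $A_{n_1}^{j_1} \frac{i}{x_e} A_{n_2}^{j_2}$ combining with the cancelled configuration to form $A_{n_1+n_2-2}^{j_1+j_2}$. With the propagator $\frac{i}{X_e}$, exactly the same algebraic identity holds because \cref{diff_skalar_beliebig_vn} still contains precisely the summand $X_e$ needed to cancel an internal edge and, by \cref{diff_skalar_beliebig_lem_intern}, such summands organize into the larger tree sum without double counting. Once this bookkeeping is checked, the remainder of the argument is purely combinatorial and identical to the quadratic case, so no new analytic input is required.
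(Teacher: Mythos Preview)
Your proposal is correct and follows the same line as the paper's proof, just spelled out in more detail. The paper's argument is even terser: it invokes \cref{thm_A1n} to get $A_n^1 = -iC_n(X_1+\ldots+X_n)$, observes there is no constant term surviving when all $X_j=0$ (hence $A_n^0=0$), and for loops simply states that the entire discussion of \cref{subs_free_loops} (i.e.\ \cref{lem_loop} and the consequences drawn from it) carries over verbatim upon replacing $x_j$ by $X_j$. Your explicit rederivation of the meta-vertex decomposition and the check of step~(3) are exactly what that replacement entails, so the two proofs coincide in substance.

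One minor wording point: in \cref{lem_loop}, $j_k$ denotes the number of external edges of $\Gamma$ attached to the $k$-th meta-vertex, not the number of offshell ones. When all external edges of $\Gamma$ are onshell, the correct statement is that each meta-vertex evaluates to $A^0_{n_k}$ (since both its external-to-$\Gamma$ edges and its uncancelled internal edges are onshell in the sense of \cref{subs_cuts}), not that $j_k=0$. The conclusion is unaffected.
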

\begin{proof}
 This is analogous to \cref{lem_Sn_free}. The vanishing  follows immediately from \cref{thm_A1n} since there is no constant term which survives if all $X_j=0$. For loop amplitudes, the discussion from \cref{subs_free_loops} is equally valid here where just all $x_j$ are to be replaced by $X_j$. 
\end{proof}

\subsection{Recursive definition of tree sums}

 The proof of \cref{thm_bn} in \cite{KY17} relied on a recursive definition of $b_n$. Such a definition is possible in the arbitrary-propagator-case as well: $b_n$ has one upper vertex with valence $k+1$ connected to $k$ smaller tree sums. One has to sum all partitions of the $n$ lower edges into $k$ subsets and also all numbers $k$ of subtrees. Additionally, the offshell propagator $\frac{i}{X_{1+\ldots + n}}$ has to be included and all the lower propagators $\left \lbrace X_l \right \rbrace _{l\in \left \lbrace 1, \ldots, n \right \rbrace  }$ are set to zero. This yields

\begin{align*} 
b_n &= - \frac{1}{X_{1+\ldots + n}} \cdot  \sum_{k>1} \sum_{\left \lbrace P_j \right \rbrace  \in P^{(n,k)}} b_{\abs{P_1}} \cdots b_{\abs{P_k}} \cdot v_{k+1} \Big|_{\left \lbrace X_l \right \rbrace  =0}\\
&= -  \frac{1}{2X_{1+\ldots + n}} \sum_{k=2}^n\sum_{\left \lbrace P_j \right \rbrace  \in P^{(n,k)}} b_{\abs{P_1}} \cdots b_{\abs{P_k}}\cdot \sum_{j=1}^{k} a_{k-j} a_{j-1} (k+1-j)! j! \sum_{Q\in Q^{(k+1,j)}} X_{Q}   \Big|_{\left \lbrace X_l \right \rbrace  =0}.
\end{align*}
Here, $P^{(n,k)}$ is the set of all possible partitions of $\left \lbrace 1, \ldots, n \right \rbrace $ into $k$ nonempty disjoint subsets
\begin{align*} 
P^{(n,k)} &= \left \lbrace \left \lbrace P_j \right \rbrace _j:  P_1 \sqcup \ldots \sqcup P_k = \left \lbrace 1, \ldots, n \right \rbrace  , \abs{P_j} \geq 1\right \rbrace  
\end{align*}
and $Q^{(n,k)}$ from \cref{Qnk} is the set of all $k$-element subsets of $\left \lbrace 1, \ldots, n \right \rbrace$. Note that in our case, $Q^{(k+1, j)}$ are not necessarily subsets of external edges but of the top edges of the lower $b_j$. These in turn are given by the partition $P_j$ and one of the external edges of $v_{k+1}$ is the top edge which carries momentum $(1+\ldots + n )$, hence
\begin{align*} 
Q^{(k+1,j)}: =\left \lbrace  \left \lbrace P_{i_1}, \ldots, P_{i_{j}} \right \rbrace \subseteq \left \lbrace P_1, \ldots, P_k, (1+\ldots + n )\right \rbrace   \right \rbrace  .
\end{align*}

\begin{example} 
	The first nontrivial tree sum $b_2$ only involves the summand  $k=2$. Consequently, the only possible partition of the two lower edges is $1+1$ and 
	\begin{align*} 
	b_2 &= -\frac 12 \frac{1}{X_{1+2}} \sum_{j=1}^{2} a_{2-j} a_{j-1} (2+1-j)! j! \sum_{Q\in Q^{(3,j)}} X_{Q}   \Big|_{X_1=0=X_2} \\
	&= -\frac{1}{2X_{1+2}} \left( a_1 a_0 2! 1! \left( X_1+X_2  + X_{1+2} \right) + a_0 a_1 1! 2! \left( X_{1+2} + X_{1} + X_2\right)  \right)  \Big|_{X_1=0=X_2}\\
	&= -2 a_1.
	\end{align*}
	This coincides with \cref{ex_bn}. For the next tree $b_3$, two values of $k$ contribute:
	\begin{align*} 
	b_3 &= -\frac{1}{2 X_{1+2+3}} \left( 3  b_2  \sum_{j=1}^2 a_{2-j} a_{j-1} (3-j)! j!\sum_{Q\in Q^{(3,j)}} X_{Q}   +   \sum_{j=1}^3 a_{3-j} a_{j-1} (4-j)! j! \sum_{Q\in Q^{(4,j)}} X_{Q}    \right)  .
	\end{align*}
	Computing the sums and inserting $\left \lbrace X_l \right \rbrace =0$ one again obtains the same result as in \cref{ex_bn},
	\begin{align*} 
	b_3 &= -\frac{1}{2 X_{1+2+3}} \left( -8 a_1^2 \left( X_{1+2} + X_{2+3} + X_{1+3} + 3X_{1+2+3} \right)  \right. \\
	&\qquad \left. + 12 a_2 \left( X_{1+2+3} \right) + 8a_1^2 \left( X_{1+2} + X_{2+3} + X_{1+3} \right)   \right) \\
	&=12 a_1^2-6a_2.
	\end{align*}
\end{example}

\subsection{Interacting theory}

If instead of \cref{diff_skalar_beliebig_L} the underlying Lagrangian density is
\begin{align} \label{diff_skalar_beliebig_L_interacting}
\mc L_\phi &= \frac 12 \phi(x) \hat X \phi (x) -\frac{\lambda_s }{s!}\phi^s( x),
\end{align}
additionally the vertices $-iw^{(s)}_n$ given by \cref{diff_arb_wns} are present. But since by \cref{thm_A1n} each tree sum cancels precisely one adjacent propagator $\frac{i}{X_j}$, the argument from \cref{subs_cancellation_higher} works even for non-standard propagators. Especially, \cref{thm_phis_Sn} still holds.

\section{Non-local field transformations}\label{subs_nonlocal}

We noted below \cref{arb_generalvertex} that even the most general vertex which can arise by a local diffeomorphism \cref{def_diffeomorphism} of a scalar interacting quantum field theory \cref{L_arb} has a very restricted momentum dependence which is essentially determined by the propagator resp. offshell variable \cref{def_offshellvariable} of adjacent edges. On the other hand, in \cref{sec_arbitrary} we saw that the combinatorics of the diffeomorphism are independent of the momentum dependence of the propagator. In this section we include derivatives into the transformation $\phi \mapsto \rho$ in order to change the momentum dependence of the propagator of a free field. 

\subsection{Arbitrary propagator from field transformation}

We first consider a transformation involving derivatives, but not a local diffeomorphism, i.e.
\begin{align}\label{nonlocal_transform_only} 
\phi (x) &= \sum_{k=0}^\infty \alpha_k \left( -\partial_\mu \partial^\mu \right) ^k \rho (x).
\end{align}
Applied to the standard free Lagrangian \cref{Lfree}, this yields (after $2k+1$ partial integrations)
\begin{align}\label{L_nonlocal_only} 
\mc L_\rho &= \frac 12  \rho (x)  \sum_{n=0}^\infty \sum_{m=0}^n \alpha_{n-k} \alpha_k \left( -\partial_\mu \partial^\mu -m^2  \right)  \left( -\partial_\mu \partial^\mu \right) ^{n} \rho (x).
\end{align}
This equals the Lagrangian \cref{diff_skalar_beliebig_L} with the operator
\begin{align*} 
\hat X &= \sum_{n=0}^\infty \sum_{k=0}^n \alpha_{n-k} \alpha_k \left( -\partial_\mu \partial^\mu -m^2  \right)  \left( -\partial_\mu \partial^\mu \right) ^{n}= \sum_{k=0}^\infty \beta_k \left( -\partial_\mu \partial^\mu  \right) ^k
\end{align*}
where
\begin{align}\label{nonlocal_betan} 
\beta_n &= \sum_{k=0}^{n-1} \alpha_{n-1-k} \alpha_k -m^2 \sum_{k=0}^n \alpha_{n-k} \alpha_k .
\end{align}
By this, the transformed field $\rho$ can be identified with the underlying field $\phi$ in the free field case in \cref{sec_arbitrary}.
 
\subsection{Local diffeomorphism and non-local transformation}\label{subs_local_nonlocal}

\Cref{L_nonlocal_only} suggests that with \cref{nonlocal_betan} the non-locally transformed field $\rho$  from \cref{nonlocal_transform_only} takes the role of the underlying field $\phi$ in the local diffeomorphism. That is, the local diffeomorphism \cref{def_diffeomorphism} replaces $\rho$ in \cref{nonlocal_transform_only}.

\begin{definition}\label{def_nonlocal_transformation}
	In a combined transformation, the field $\phi(x)$ is replaced by 
	\begin{align*} 
	\phi(x) &= \sum_{k=0}^\infty \alpha_k \left( -\partial_\mu \partial^\mu  \right) ^k  \left( \sum_{j=0}^\infty a_j \rho^{j+1} (x) \right) 
	\end{align*}
	where both $\left \lbrace \alpha_k \right \rbrace _{k \geq 0}$ and $\left \lbrace a_j \right \rbrace _{j\geq 0}$ are constants and $a_0=1$.
\end{definition}
Indeed, an explicit calculation shows that it is not possible to have $\alpha_k$ depend on $j$, i.e. to transform different monomials with different derivative operators. If that were the case, the cancellation property of the local diffeomorphism would break down as it would fail to produce vertices proportional to an inverse propagator.

The transformation \cref{def_nonlocal_transformation} applied to a free Lagrangian \cref{Lfree}   gives rise to a field $\rho$ which has propagator \cref{diff_skalar_beliebig_propagator}
\begin{align}\label{nonlocal_propagator} 
\langle \rho(k) \rho(-k) \rangle &= \frac{i}{X_k} = \frac{i}{\beta_0 + \beta_1 k^2 + \beta_2 (k^2)^2 + \ldots }
\end{align}
with $\beta_n$ from \cref{nonlocal_betan} and $(n\geq 3)$-valent interaction vertices \cref{diff_skalar_beliebig_vn},
\begin{align*}
i v_n &= i \frac 12 \sum_{k=1}^{n-1} a_{n-k-1} a_{k-1} (n-k)! k! \sum_{P\in Q^{(n,k)}} X_{P }.
\end{align*}
By \cref{thm_A1n}, all $(n>2)$-valent tree sums vanish in the onshell limit $X_j=0\ \forall j\in \left \lbrace 1, \ldots, n \right \rbrace $, thus $\rho$ is a free theory. Note that on the other hand  its 2-point function stays  \cref{nonlocal_propagator}, so the derivative part of the non-local transform \cref{def_nonlocal_transformation} has a potential influence on observables. 

In this paper, we do not consider  the effect of \cref{def_nonlocal_transformation} on an underlying interacting Lagrangian.

\section{Conclusion}

We have continued the perturbative examination of diffeomorphisms of quantum field started in \cite{velenich,KY17} and reached the following results:
\begin{enumerate}
	\item Even if the underlying Lagrangian is not a free theory, but contains self-interaction, still a local diffeomorphism does not contribute to the $S$-matrix (\cref{thm_Alm}) provided tadpole graphs vanish.  Thus, scalar quantum field theories related by diffeomorphism can be regarded physically equivalent.
		\item The invariance of the $S$-matrix under local field diffeomorphisms does not require a special form of the propagator (\cref{thm_A1n}). 
	\item The coefficients of the diffeomorphism of an interacting theory can be chosen such that for an offshell amplitude, the interaction is cancelled at tree level and the integrands of loop amplitudes are zero at a specific external momentum up to tadpole graphs (\cref{thm_adiabatic_phis}) .
	\item In the case of an underlying free Lagrangian, it is possible to extend the local diffeomorphism to a transformation involving  derivatives in a way that preserves the combinatoric structure of the local diffeomorphism (\cref{subs_local_nonlocal}).
\end{enumerate}
Apart from the natural task to extend the study to charged or non-scalar fields, open questions  to be covered in future work are:
\begin{enumerate}
	\item There should be a simple combinatoric explaination why the tree sums $b_n$ from \cref{def_bn} are the coefficients of the inverse diffeomorphism (\cref{lem_inverse,lem_inverse2}), perhaps by a Legendre transform \cite{jackson_robust_2017,weinberg2}.
	\item By power counting, the local diffeomorphism $\rho$ is non-renormalizable, whereas by the above theorems we (in principle) know its $S$-matrix values, which are finite if the underlying theory is renormalizable. At which point do assumptions about renormalizability enter and is it possible to carry out a renormalization procedure for the field $\rho$ directly, i.e. without undoing the diffeomorphism?
\end{enumerate}

.

 
\printbibliography

\end{document}